\newcommand\ackname{Acknowledgements}
  \newenvironment{acknowledgements}{%
      \titlepage
      \null\vfil
      \@beginparpenalty\@lowpenalty
      \begin{center}%
        \bfseries \ackname
        \@endparpenalty\@M
      \end{center}}%
     {\par\vfil\null\endtitlepage}
\title{Flags of almost affine codes}
\author{Trygve Johnsen\thanks{ Dept. of Mathematics, UiT The Arctic University of Norway, N-9037 Troms{\o}, Norway, \texttt{Trygve.Johnsen@uit.no}}  \and Hugues Verdure\thanks{ Dept. of Mathematics, UiT The Arctic University of Norway, N-9037 Troms{\o}, Norway, \texttt{Hugues.Verdure@uit.no}} }
\newtheorem{definition}{Definition}
\newtheorem{proposition}{Proposition}
\newtheorem{corollary}{Corollary}
\newtheorem{theorem}{Theorem}
\newtheorem{example}{Example}
\newtheorem{remark}{Remark}
\newtheorem{lemma}{Lemma}
\newenvironment{proof}[1][Proof]{\begin{trivlist}
\item[\hskip \labelsep {\bfseries #1}]}{\end{trivlist}}
\newcommand{\N}{\mathbb{N}}
\newcommand{\F}{\mathbb{F}}
\renewcommand{\-}{\backslash}
\newcommand{\E}{\mathcal{E}}
\newcommand{\qed}{\nobreak \ifvmode \relax \else
      \ifdim\lastskip<1.5em \hskip-\lastskip
      \hskip1.5em plus0em minus0.5em \fi \nobreak
      \vrule height0.75em width0.5em depth0.25em\fi}
\begin{document}

\maketitle

\begin{abstract}
We describe a two-party wire-tap channel of type II in the framework of almost affine codes. Its cryptological performance is related to some relative profiles of a pair of almost affine codes. These profiles are analogues of relative generalized Hamming weights in the linear case.

Keywords: two-party wire-tap channel of type II, almost affine codes

\end{abstract}

%
%
%
%

\section{Introduction}

In~\cite{OW}, Ozarow and Wyner describe the wire-tap channel of type II. The idea is to encrypt a message $\bm{m}$ without the use of any key (public or private), and send the encrypted message on a noiseless channel. The intruder is able to listen to parts of the encrypted message. One is interested in designing a system that gives as little information as possible to the intruder. In their original paper, the authors give a system using linear codes: : given a $(n,n-k)$ parity check matrix $H$ of the linear code , the sender picks up uniformly and at random a preimage by $H$ of $\bm{m}$, where $\bm{m}$ is  seen as a column vector of length $n-k$, and sends this preimage of length $n$ over the channel.

In~\cite{wei91}, Wei relates the maximum amount of information collected by the intruder to the generalized Hamming weights of the dual of the linear code.

In~\cite{LMVC}, the authors look at a variant of Ozarow and Wyner's scheme, namely the so-called two-party wire-tap channel of type II. This time, the intruder, not only is able to listen to parts of the encrypted message, but also knows a part of the original message. This leads to the study of pairs of linear codes. In~\cite{LCL} and~\cite{ZBLH}, the authors looks look at relative generalized Hamming weights and relative dimension/length profiles of such pairs of codes, and they relate these quantities to the amount of information gathered by the intruder in the two-party wire-tap channel of type II.

Almost affine codes are a strict generalization of linear/affine codes, introduced by Simonis and Ashikhmin i~\cite{SA}. In~\cite{JV}, we give a scheme for the wire-tap channel of type II that uses almost affine codes. We also give an analogue of Wei's result relating the amount of information gained by the intruder to the generalized Hamming weights of the dual of the matroid associated to the almost affine code.

In the present paper, we look at the two-party wire-tap channel of type II for almost affine codes. We build on the scheme presented in~\cite{JV}. As in~\cite{LMVC}, the intruder is now allowed to listen to parts of the encrypted message and knows parts of the original message. We build a larger almost affine code that only depends on the part of the original message known to the intruder. At this point, we deviate from the scheme in ~\cite{LMVC}, where the subset of the original sympols tapped by the intruder, as described in that paper, corresponds to a (linear) subcode, say $C_2$, of $C$. 
Hence our scheme, specialized to the subclass of linear codes, is not exactly the same as that of \cite{LMVC}. If, however, in ~\cite{LMVC}, one simply chooses to look at the dualized codes,  one gets a set-up, which is closer to ours, since the tapped symbols of the original message then correspond to a larger code $C_2^*$ containing 
$C^*.$ Thus our scheme is an analogue of the one in \cite{LMVC}.

A unifying theme for all four set-ups (the ones in \cite{wei91}, \cite{LMVC}, \cite{JV}, and the present one) is that of rank functions.  Each linear or almost affine code defines a rank function of a matroid. Moreover each pair of a code and a subcode comes equipped by a rank function $\rho$, which is 
the difference of the two (matroid) rank function of the codes in question. The function $\rho$ is not
necessarily a matroid rank function, but it is a so-called demi-matroid rank function. Demi-matroids were defined and described in \cite{BJMS},  and so-called profiles of demi-matroids were further described in \cite{BJM}. The relative Hamming weights of pairs of linear codes, as described in \cite{LMVC}, are examples of such profiles.
We analyze the performance of our cryptological set-up, and find that it is described in detail by 
an essential profile of the associated demi-matroid. This is formulated in our main result, Theorem \ref{summingup}. 

The paper is organized as follows: in Section~\ref{prel}, we recall basic definitions and properties about matroids, demi-matroids and almost affine codes.

In Section~\ref{flags}, we study flags, or chains, of almost affine codes, and the demi-matroids formed  by the alternating sums of their rank functions. A main result is Proposition \ref{mainprop}, which is a generalization of \cite[Theorem 9]{BJM}. Flags of length two will be of particular interest in remaining part of the paper.

In Section~\ref{meat} we investigate a two-party wire tap channel of type II, and describe in detail how we use an almost affine code $C$ and a certain function $\varphi: A \times A \longrightarrow A$ for the code alphabet $A$, to encode messages. Moreover we describe another bigger almost affine code $D$ that corresponds to uncoded symbols tapped by an adversary. By means of an entropy analysis we arrive at our conclusions. We also give an example with a non-linear almost affine code, whose associated matroid is the non-Pappus matroid of length 9 and rank $3$.

%
%
%
%

\section{Matroids, demi-matroids and almost affine codes} \label{prel}

In this section, we essentially recall relevant material that will be needed in the sequel, and we do not claim to have any new results here. We refer to~\cite{O} for the theory of matroids, to~\cite{BJMS} for an introduction on demi-matroids and to~\cite{SA} for an introduction on almost affine codes, and we will use their notation.

\subsection{Matroids and demi-matroids}

A matroid is a combinatorial structure that extend the notion of linear (in)dependency. There are many equivalent definitions, but we will give just one here. 

\begin{definition}\label{def:matroid} A matroid is a pair $M=(E,r)$ where $E$ is a finite set, and $r$ a function on the power set of $E$ into $\N$ satisfying the following axioms: \begin{description}
\item[(R1)] $r(\emptyset)=0$,
\item[(R2)] for every subset  $X \subset E$ and $x \in E$, $r(X) \leqslant r(X \cup \{x\}) \leqslant r(X)+1,$
\item[(R3)] for every $ X \subset E$ and $x,y \in E$, if $r(X) = r(X \cup \{x\}) = r(X \cup \{y\})$, then $r(X \cup \{x,y\})=r(X).$
\end{description}
\end{definition}

The set $E$ is called the ground set, and the function $r$ the rank function. Any subset $X \subset E$ with $r(X)=|X|$ is called independent. A basis is a inclusion maximal independent set. Finally, the rank $r(M)$ of the matroid $M$ is $r(E)$, and is also the cardinality of any basis.

Demi-matroids were introduced in~\cite{BJMS} and elaborated on in~\cite{BJM} when the authors studied flags of linear codes. They are a generalization of matroids in the following way:

\begin{definition}\label{def:demi} A demi-matroid is a pair $M=(E,r)$ where $E$ is a finite set, and $r$ a function on the power set of $E$ into $\N$ satisfying axioms (R1) and (R2) above.
\end{definition}

\begin{remark} The definitions in~\cite{BJMS,BJM} are different, but equivalent. See for example~\cite[Theorem 5.5]{gordon12} 
\end{remark}

As for matroids, the set $E$ is called the ground set, and the function $r$ the rank function. The rank $r(M)$ of the demi-matroid $M$ is $r(E)$. \\

Matroids and demi-matroids have duals defined in the following way: 

\begin{proposition}\label{prop:firstdual}Let $M=(E,r)$ be a matroid (respectively a demi-matroid). Then $M^*=(E,r^*)$ with $r^*$ defined as \[r^*(X) =|X|+r(E \- X)-r(E)\] is a matroid (respectively a demi-matroid). Moreover, $\left(M^*\right)^*=M$.
\end{proposition}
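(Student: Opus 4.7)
The plan is to verify the rank axioms for $r^*$ in turn, treating the demi-matroid case first and then adding what is needed for the matroid case, and finally verifying the involution $(M^*)^* = M$ by a direct calculation.

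For the demi-matroid part, axiom (R1) is immediate: $r^*(\emptyset) = 0 + r(E) - r(E) = 0$. For (R2), I would fix $x \notin X$ and compute
\[
r^*(X\cup\{x\}) - r^*(X) = 1 + r\bigl((E\setminus X)\setminus\{x\}\bigr) - r(E\setminus X).
\]
Applying axiom (R2) of $r$ to the set $(E\setminus X)\setminus\{x\}$ and the element $x$ forces the difference $r(E\setminus X) - r((E\setminus X)\setminus\{x\})$ to be either $0$ or $1$, so $r^*(X\cup\{x\}) - r^*(X) \in \{0,1\}$, as required.

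For the matroid case, I need to verify (R3). Write $Z = E\setminus X$ and suppose $r^*(X) = r^*(X\cup\{x\}) = r^*(X\cup\{y\})$. Using the computation from the (R2) step, these equalities translate into
\[
r(Z\setminus\{x\}) = r(Z) - 1 = r(Z\setminus\{y\}).
\]
I want to show $r(Z\setminus\{x,y\}) = r(Z) - 2$, for this will give $r^*(X\cup\{x,y\}) = r^*(X)$. Suppose toward contradiction that $r(Z\setminus\{x,y\}) = r(Z) - 1$; then $r(Z\setminus\{x,y\}) = r(Z\setminus\{y\}) = r(Z\setminus\{x\})$, and applying (R3) for $r$ at the set $Z\setminus\{x,y\}$ with the two elements $x,y$ yields $r(Z) = r(Z\setminus\{x,y\}) = r(Z) - 1$, a contradiction. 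Combined with (R2) this forces the desired value, completing the verification.

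The hardest step is precisely the (R3) verification, because the natural symmetric monotonicity argument via (R2) only pins the unknown rank down to two possible values; one needs the single contradiction step above (using (R3) for $r$ itself) to eliminate the wrong one. Finally, for the involution, I would plug the formula into itself:
\[
(r^*)^*(X) = |X| + r^*(E\setminus X) - r^*(E) = |X| + \bigl(|E\setminus X| + r(X) - r(E)\bigr) - \bigl(|E| - r(E)\bigr) = r(X),
\]
so $(M^*)^* = M$, which finishes the proof.
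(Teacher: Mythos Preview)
Your proof is correct. The paper's own proof is far less detailed: it simply cites a standard matroid textbook for the matroid case and declares the demi-matroid case ``immediate from axioms (R1) and (R2)'', without writing out the involution either. Your argument is exactly the direct axiom-by-axiom verification that this reference implicitly points to, so the approach is the same, just made explicit; the only microscopic omissions are the degenerate cases $x\in X$ (for (R2)) and $x\in X$, $y\in X$, or $x=y$ (for (R3)), all of which are trivially handled since then the relevant unions collapse.
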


\begin{proof} The fact that $r^*$ is the rank function of a matroid if $r$ is so, is contained in any standard textbook on matroids, e.g. \cite{O}. The fact that $r^*$ is the rank function of a demi-matroid if $r$ is so, is immediate from axioms (R1) and (R2).
\qed\end{proof}

The matroid (respectively demi-matroid) $M^*$ is called the dual (respectively the dual or first dual) of $M$. It has rank $|E|-r(M)$. Demi-matroids have another dual, called the supplement dual or second dual:

\begin{proposition}\label{prop:seconddual}Let $M=(E,r)$ be a demi-matroid. Then $\overline{M}=(E,\overline{r})$ with $\overline{r}$ defined as \[\overline{r}(X)=r(E)-r(E\-X)\] is a demi-matroid. Moreover, we have $\overline{\overline{M}}=M$ and $\overline{M^*}=\overline{M}^*$. 
\end{proposition}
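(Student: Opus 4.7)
The plan is to verify the three assertions in turn, all by direct computation from the definitions, exploiting only axioms (R1) and (R2) for $r$ together with Proposition~\ref{prop:firstdual}.

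First I would check that $\overline{r}$ satisfies (R1) and (R2). Axiom (R1) is immediate: $\overline{r}(\emptyset) = r(E) - r(E\setminus\emptyset) = r(E) - r(E) = 0$. For (R2), fix $X \subset E$ and $x \in E$; writing $Y = E\setminus X$, we have $E \setminus (X \cup \{x\}) = Y \setminus \{x\}$, so applying (R2) for $r$ to $Y \setminus \{x\}$ and $x$ gives
\[
r(Y\setminus\{x\}) \,\leq\, r(Y) \,\leq\, r(Y\setminus\{x\}) + 1.
\]
Subtracting from $r(E)$ and reversing inequalities yields exactly $\overline{r}(X) \leq \overline{r}(X \cup \{x\}) \leq \overline{r}(X) + 1$, which is (R2) for $\overline{r}$.

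Next, for the involutivity $\overline{\overline{M}} = M$, I would simply unfold:
\[
\overline{\overline{r}}(X) \;=\; \overline{r}(E) - \overline{r}(E\setminus X) \;=\; \bigl(r(E)-r(\emptyset)\bigr) - \bigl(r(E)-r(X)\bigr) \;=\; r(X),
\]
using (R1) for $r$ in the form $r(\emptyset)=0$.

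Finally, for $\overline{M^*} = \overline{M}^*$, I would compute both rank functions and compare. On the one hand,
\[
\overline{r^*}(X) \;=\; r^*(E) - r^*(E\setminus X) \;=\; \bigl(|E| + r(\emptyset) - r(E)\bigr) - \bigl(|E\setminus X| + r(X) - r(E)\bigr) \;=\; |X| - r(X).
\]
On the other hand,
\[
\overline{r}^{\,*}(X) \;=\; |X| + \overline{r}(E\setminus X) - \overline{r}(E) \;=\; |X| + \bigl(r(E)-r(X)\bigr) - \bigl(r(E)-r(\emptyset)\bigr) \;=\; |X| - r(X).
\]
The two agree, so $\overline{M^*} = \overline{M}^*$. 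Everything here is a routine bookkeeping exercise; there is no genuine obstacle, since no argument beyond (R1), (R2), and the definitions of the two duals is needed, and in particular the stronger matroid axiom (R3) is never invoked.
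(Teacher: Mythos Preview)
Your proof is correct: each of the three assertions follows by the straightforward unfolding you give, and (R1)--(R2) alone suffice. The paper itself does not supply an argument here but simply refers the reader to \cite[Theorem 4]{BJMS}; your self-contained verification is exactly the kind of routine computation that reference carries out, so there is no substantive difference in approach.
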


\begin{proof}See~\cite[Theorem 4]{BJMS}.
\qed\end{proof}

\subsection{Almost affine codes}
Almost affine codes were first introduced in~\cite{SA}, and are a combinatorial generalization of affine codes. 

\begin{definition}An almost affine code on a finite alphabet $A$, of length $n$ and dimension $k$ is a subset $C \subset A^n$ such that $|C|=|A|^k$ and such that for every subset $X \subset E=\{1,\cdots,n\}$, \[\log_{|A|}|C_X| \in \N,\] where $C_X$ is the puncturing of $C$ with respect to $E\-X$.

An almost affine subcode of $C$ is a subset $D \subset C$ which is itself an almost affine code on the same alphabet.
\end{definition}

\begin{remark} Any linear or affine code is obviously an almost affine code.
\end{remark}

To any almost affine code $C$ of length $n$ and dimension $k$ on the alphabet $A$, we can associate a matroid $M_C$ on the ground set $E=\{1,\cdots,n\}$ and with rank function \[r(X) = \log_{|A|} |C_X|,\] for $X \subset E$.

\begin{example}\label{running}Consider the almost affine code $C'$ in~\cite[Example 5]{SA}. It is a code of length $3$ and dimension $2$ on the alphabet $A=\{0,1,2,3\}$. Its set of codewords is \begin{align*} 000 && 011 && 022 && 033\\101&&112&&123&&130 \\ 202&&213&&220&&231\\303&&310&&321&&332\end{align*} Its matroid is the uniform matroid $U_{2,3}$ of rank $2$ on $3$ elements. This is an example of an almost affine code which is not equivalent to a linear code. \end{example}

While it is easy to see that any linear code has subcodes of a given dimension (and we even know the number of such subcodes), it is not straightforward to do so for almost affine codes. This is what we summarize here now.

\begin{definition} Let $C$ be a block code of length $n$, and let $\bm{c} \in C$ be fixed. The $\bm{c}$-support of any codeword $\bm{w}$ is \[Supp(\bm{w},{\bm{c}}) = \{i, \bm{c}_i \neq {\bm{w}}_i\}.\] The $\bm{c}$-support of $C$ is \[Supp(C,\bm{c}) =\bigcup_{\bm{w} \in C} Supp(\bm{w},{\bm{c}}).\]
\end{definition}

Note that the $\bm{c}$-support of the code is independent of the choice of $\bm{c} \in C$ (see~\cite[Lemma 1]{JV}), and it will therefore be denoted by $Supp(C)$ without reference to any codeword.

\begin{definition} \label{fixed}
Let $C$ be an almost affine code of length $n$, and let ${\bm{c}} \in F^n$ be fixed. Then \[C(X,{\bm{c}}) = \{\bm{w} \in C, \bm{w}_X = {\bm{c}}_X\},\] where $\bm{w}_X$ is the projection of $\bm{w}$ to $X$.
\end{definition} 

This might be empty, or not be an almost affine code, but when we take $\bm{c} \in C$, we get the following (\cite[Corollary 1]{SA}): 

\begin{proposition}\label{prop2} Let $C$ be an almost affine code of length $n$ and dimension $k$ on the alphabet $A$. Let ${\bm{c}} \in C$. Let $X \subset \{1,\cdots,n\}$. Then $C(X,{\bm{c}})$ is an almost affine subcode of $C$, and moreover, the rank function $\rho$ of the matroid associated to $C(X,\bm{c})$ is given by \[\rho(Y)= r(X \cup Y)-r(X)\] where $r$ is the rank function of the matroid $M_C$. In particular, \[|C(X,{\bm{c}})| = |A|^{k-r(X)}.\] 
\end{proposition}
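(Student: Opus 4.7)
The plan is to first establish an equi-distribution property for fibers of projections of an almost affine code, and then derive the proposition by straightforward counting.

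First, I would prove the following key structural lemma: for any almost affine code $C$ of dimension $k$ with rank function $r$, any $X \subset E$, and any $\bm{c} \in C$, the fiber $\{\bm{w} \in C : \bm{w}_X = \bm{c}_X\}$ has cardinality $|A|^{k-r(X)}$. Pick a basis $B$ of the matroid $M_C$ restricted to $X$ and extend it to a basis $B'$ of $M_C$. Every subset $Y \subset B'$ is independent, so $|C_Y| = |A|^{|Y|}$; in particular $C_{B'} = A^{B'}$ and $|C_{B'}| = |A|^k = |C|$, which forces the projection $\pi_{B'} : C \to C_{B'}$ to be a bijection. Likewise $\pi_B : C_X \to C_B$ is a bijection, since it is surjective and $|C_X| = |A|^{r(X)} = |A|^{|B|} = |C_B|$. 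Because $\bm{w}_X = \bm{c}_X$ is equivalent to $\bm{w}_B = \bm{c}_B$, the fiber above $\bm{c}_X$ corresponds under $\pi_{B'}$ to the set of tuples in $A^{B'}$ whose $B$-coordinates match $\bm{c}_B$, which has $|A|^{|B'|-|B|} = |A|^{k-r(X)}$ elements.

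From the lemma applied to $X$, the last assertion $|C(X,\bm{c})| = |A|^{k-r(X)}$ is immediate. To identify the rank function, fix $Y \subset E$ and set
\[C'' = \{\bm{d} \in C_{X \cup Y} : \bm{d}_X = \bm{c}_X\}.\]
Applying the lemma to $X \cup Y$, each fiber of the projection $C \to C_{X \cup Y}$ has size $|A|^{k - r(X \cup Y)}$, and $C(X,\bm{c})$ is precisely the disjoint union of the fibers sitting above elements of $C''$; hence $|C''| = |A|^{r(X\cup Y)-r(X)}$. Since an element $\bm{d}$ of $C_{X\cup Y}$ is determined by its restrictions to $X$ and $Y$, the natural projection $C'' \to C(X,\bm{c})_Y$ is injective and thus a bijection onto its image. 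Therefore $|C(X,\bm{c})_Y| = |A|^{r(X\cup Y)-r(X)}$, which simultaneously shows that $C(X,\bm{c})$ is an almost affine code and that its matroid rank function is $\rho(Y) = r(X \cup Y) - r(X)$.

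The only substantive step is the balanced-fibers lemma; everything else is bookkeeping. The main obstacle is the passage from the hypothesis ``every puncturing has size a power of $|A|$'' to the stronger statement ``every fiber of $\pi_X$ has the same size $|A|^{k-r(X)}$,'' which the basis-extension argument above handles cleanly by reducing to a full product on an independent set.
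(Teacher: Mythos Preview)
Your argument is correct and is essentially the standard way this result is proved in the literature. The paper itself does not give an independent proof: Proposition~\ref{prop2} is stated with the attribution ``\cite[Corollary~1]{SA}'' and no further argument, so there is nothing to compare against beyond noting that your basis-extension plus fiber-counting argument is exactly how Simonis and Ashikhmin establish the statement.

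One small expository wrinkle: you write that the projection $C'' \to C(X,\bm{c})_Y$ is ``injective and thus a bijection onto its image,'' and then conclude $|C(X,\bm{c})_Y| = |C''|$. That conclusion needs surjectivity onto $C(X,\bm{c})_Y$, not merely ``bijection onto its image,'' which is a tautology for any injection. Surjectivity is immediate (every element of $C(X,\bm{c})_Y$ is $\bm{w}_Y$ for some $\bm{w}\in C(X,\bm{c})$, and then $\bm{w}_{X\cup Y}\in C''$ maps to it), so this is only a matter of phrasing, not a gap.
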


\begin{corollary}\label{subcodes} Every almost affine code $C$ of dimension $k$ has almost affine subcodes of dimension $0\leqslant i \leqslant k$.
\end{corollary}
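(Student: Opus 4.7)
The plan is to deduce this directly from Proposition \ref{prop2} by producing, for each target dimension $i \in \{0, 1, \ldots, k\}$, a subset $X \subset E = \{1,\ldots,n\}$ whose matroid rank $r(X)$ equals $k-i$, and then reading off the subcode $C(X,\bm{c})$.

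First, I would fix an arbitrary codeword $\bm{c} \in C$; this is possible since $|C| = |A|^k \geq 1$. The goal then becomes exhibiting, for every integer $j$ with $0 \leq j \leq k$, a subset $X_j \subset E$ with $r(X_j) = j$. For this I would invoke the fact that the matroid $M_C$ has rank $r(E) = k$ (because $\log_{|A|}|C| = k$), so it admits a basis $B = \{e_1, \ldots, e_k\} \subset E$. Since every subset of a basis is independent, the nested sets $X_j = \{e_1, \ldots, e_j\}$ satisfy $r(X_j) = |X_j| = j$ for each $0 \leq j \leq k$ (with $X_0 = \emptyset$ covered by axiom (R1)). Alternatively, one can argue inductively using axiom (R2): starting from $\emptyset$ and adjoining elements of $E$ one at a time, the rank increases by at most $1$ at each step and must reach $k$, so every intermediate value $j \in \{0,\ldots,k\}$ is attained.

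Given $X_{k-i}$, Proposition \ref{prop2} applied with $X = X_{k-i}$ yields that $C(X_{k-i}, \bm{c})$ is an almost affine subcode of $C$ of cardinality $|A|^{k - r(X_{k-i})} = |A|^{k - (k-i)} = |A|^i$, hence of dimension $i$. This gives the required subcode for each $0 \leq i \leq k$.

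There is no real obstacle here; the statement is a direct packaging of Proposition \ref{prop2} together with the elementary observation that matroid rank functions, being submodular and satisfying (R2), realize every value between $0$ and $r(E)$. The only subtlety worth flagging in the write-up is the choice of $\bm{c}$, which must lie in $C$ so that $C(X_{k-i}, \bm{c})$ is nonempty and in fact an almost affine subcode (rather than merely a subset of $A^n$ as in Definition \ref{fixed} for general $\bm{c}$).
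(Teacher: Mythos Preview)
Your proposal is correct and is precisely the argument the paper intends: the corollary is stated immediately after Proposition~\ref{prop2} with no explicit proof, and your write-up simply unpacks that implication by choosing $X$ to be a rank-$(k-i)$ subset of a basis of $M_C$ and applying the cardinality formula $|C(X,\bm{c})|=|A|^{k-r(X)}$. There is nothing to add.
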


%
%
%
%

\section{Flags of demi-matroids and almost affine codes} \label{flags}

In this section, we look at flags of (demi-)matroids and almost affine codes. We show that under certain assumptions, they give rise to a demi-matroid in a natural way. These assumptions are always satisfied for flags of almost affine codes. We are mainly interested in pairs of codes in the next section, but for the sake of generality, we look at flags.

\subsection{Flags of demi-matroids}

\begin{definition} A flag of demi-matroids is a finite set of demi-matroids $M_i=(E_i,r_i)$ for $1 \leqslant i \leqslant m$ on the same ground set $E$, and such that \[\forall X \subset E,\ r_m(X) \leqslant \cdots \leqslant r_2(X) \leqslant r_1(X).\] A pair of demi-matroids is a flag with two demi-matroids.
\end{definition}

Given a flag of demi-matroids as above, we can define a function $\rho$ on the power set of $E$ as the alternative sum of the rank functions, namely \[ \forall X \subset E, \rho(X)=\sum_{i=1}^m (-1)^{i+1}r_i(X).\] It is natural to ask when the pair $M=(E,\rho)$ is a demi-matroid.

\begin{definition} Let $M=(E,r)$ be a demi-matroid. Let \[\mathcal{E}_M = \left\{(X,x),\ X \subset E,\ x \in X,\ r(X\-\{x\})=r(X)\right\}.\]
\end{definition}

\begin{theorem}\label{pairdm}Let $(M_1, M_2)$ be a pair of demi-matroids on the ground set $E$. Then $(E,\rho)$, where $\rho$ is defined above, is a demi-matroid if and only if $\E_{M_1} \subset \E_{M_2}$.
\end{theorem}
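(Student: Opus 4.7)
The plan is to verify the two demi-matroid axioms for the pair $(E,\rho)$ with $\rho = r_1 - r_2$, noting that axiom (R1) is immediate since $\rho(\emptyset) = r_1(\emptyset) - r_2(\emptyset) = 0$. So the entire content of the theorem lies in (R2), and I will analyze the incremental difference $\rho(X \cup \{x\}) - \rho(X)$ for arbitrary $X \subset E$ and $x \in E$.

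If $x \in X$, then $X \cup \{x\} = X$ and there is nothing to check, so I only need to handle $x \notin X$. In that case, by axiom (R2) applied to $M_1$ and $M_2$ separately, the quantities $a_i := r_i(X \cup \{x\}) - r_i(X)$ lie in $\{0,1\}$ for $i = 1,2$, and
\[
\rho(X \cup \{x\}) - \rho(X) = a_1 - a_2 \in \{-1, 0, 1\}.
\]
The upper bound $\rho(X \cup \{x\}) - \rho(X) \leq 1$ is therefore automatic (since $a_1 \leq 1$ and $a_2 \geq 0$), so only the lower bound $a_1 - a_2 \geq 0$ is at stake. This lower bound fails precisely when $a_1 = 0$ and $a_2 = 1$, i.e.\ when $r_1$ does not grow from $X$ to $X \cup \{x\}$ while $r_2$ does.

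Now I translate this bad configuration into the language of the sets $\E_{M_i}$. Setting $Y = X \cup \{x\}$ (so $x \in Y$ and $X = Y \setminus \{x\}$), the condition $a_1 = 0$ says $(Y,x) \in \E_{M_1}$, while the condition $a_2 = 1$ says $(Y,x) \notin \E_{M_2}$. Thus the lower bound in (R2) holds for all pairs $(X,x)$ if and only if there is no $(Y,x)$ in $\E_{M_1} \setminus \E_{M_2}$, which is exactly the condition $\E_{M_1} \subset \E_{M_2}$. Proving the two implications is then routine: one direction just assembles the observations above, and the other direction takes any $(Y,y) \in \E_{M_1}$ and, setting $X = Y \setminus \{y\}$ and $x = y$, uses (R2) for $\rho$ to force $a_2 = 0$, i.e.\ $(Y,y) \in \E_{M_2}$.

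I do not expect any genuine obstacle: once one notices that the upper bound in (R2) is free and only the lower bound is sensitive, the equivalence with $\E_{M_1} \subset \E_{M_2}$ is essentially a matter of unpacking definitions. The only point requiring minor care is the trivial case $x \in X$, which must be separated out before the bookkeeping with $a_1, a_2 \in \{0,1\}$ becomes meaningful.
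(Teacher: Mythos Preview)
Your proof is correct and follows essentially the same approach as the paper: both arguments reduce (R2) for $\rho$ to analyzing the increment $\rho(X\cup\{x\})-\rho(X)=a_1-a_2$ with $a_i\in\{0,1\}$, and identify the failure of the lower bound with the existence of some $(Y,x)\in\E_{M_1}\setminus\E_{M_2}$. The paper presents the two implications separately (a direct computation for one direction, a three-case split for the other), whereas you unify them by isolating the single bad configuration $a_1=0$, $a_2=1$; this is a cosmetic difference only.
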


\begin{proof}Suppose first that $(E,\rho)$ is a demi-matroid. Then, for every $X \subset E$ and any $x \in X$, we have \[\rho(X\-\{x\}) \leqslant \rho(X).\] Assume that $(X,x) \in \E_{M_1}$. Then \begin{eqnarray*} r_2(X)-r_2(X\-\{x\}) &=& r_1(X)-\rho(X) -r_1(X\-\{x\})+\rho(X\-\{x\}) \\&=& r_1(X)-r_1(X\-\{x\}) - (\rho(X)-\rho(X\-\{x\})) \\&=& 0 - (\rho(X)-\rho(X\-\{x\})) \\&\leqslant& 0.\end{eqnarray*} But we also have \[r_2(X)-r_2(X\-\{x\}) \geqslant 0\] which proves that this is in $\E_{M_2}$ too.

Assume now that $\E_{M_1} \subset \E_{M_2}$. (R1) is trivially fulfilled by $\rho$. And let $X \subset E$, $x \in E$. We compute \begin{eqnarray*} \rho(X \cup \{x\}) - \rho(X) &=& r_1(X \cup\{x\})-r_2(X \cup \{x\}) - (r_1(X)-r_2(X)) \\&=& (r_1(X \cup \{x\})-r_1(X))-(r_2(X \cup \{x\}) - r_2(X)) \end{eqnarray*} We have 3 cases \begin{itemize} \item If $(X \cup \{x\},x) \in \E_{M_1} \subset \E_{M_2}$, then \[\rho(X \cup \{x\}) - \rho(X) = (r_1(X \cup \{x\} - r_1(X)) - (r_2(X \cup \{x\} - r_2(X))= 0-0=0,\]
\item If $(X \cup \{x\},x) \in \E_{M_2} \- \E_{M_1}$, then \[\rho(X \cup \{x\}) - \rho(X) = (r_1(X \cup \{x\} - r_1(X)) - (r_2(X \cup \{x\} - r_2(X))= 1-0=1,\]
\item If $(X \cup \{x\},x) \not \in \E_{M_2}$, then \[\rho(X \cup \{x\}) - \rho(X) = (r_1(X \cup \{x\} - r_1(X)) - (r_2(X \cup \{x\} - r_2(X))= 1-1=0.\] \end{itemize} In any cases, \[\rho(X) \leqslant \rho(X \cup\{x\}) \leqslant \rho(X)+1.\]
\qed\end{proof}

We can now generalize to flags of demi-matroids. Note that this generalization is just an implication, not an equivalence.

\begin{theorem}\label{flagdm} Let $(M_1,\cdots,M_m)$ be a flag of matroids. If $\E_{M_1} \subset \E_{M_2} \subset \cdots \subset \E_{M_m}$, then $(E,\rho)$ with $\rho$ defined above is a demi-matroid.
\end{theorem}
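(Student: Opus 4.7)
The plan is to verify the two demi-matroid axioms (R1) and (R2) for $\rho$ directly, reducing the argument to a clean combinatorial observation about the alternating sum. Axiom (R1) is automatic: $r_i(\emptyset)=0$ for each $i$ forces $\rho(\emptyset)=0$.

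For (R2), fix $X \subset E$ and $x \in E$, and set $d_i = r_i(X \cup \{x\}) - r_i(X)$. Since each $r_i$ itself satisfies (R2), we have $d_i \in \{0,1\}$, and moreover $d_i = 0$ is precisely the statement $(X \cup \{x\}, x) \in \E_{M_i}$. The hypothesis $\E_{M_1} \subset \E_{M_2} \subset \cdots \subset \E_{M_m}$ therefore translates into the following monotonicity property: if $d_i=0$ then $d_j=0$ for all $j \geqslant i$. Equivalently, the sequence $(d_1, d_2, \ldots, d_m)$ consists of a (possibly empty) initial block of ones followed by a block of zeros.

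With this reformulation in hand, the proof concludes quickly. One has $\rho(X \cup \{x\}) - \rho(X) = \sum_{i=1}^m (-1)^{i+1} d_i$, which by the above reduces to $\sum_{i=1}^k (-1)^{i+1}$, where $k$ is the length of the initial block of ones. This partial alternating sum equals $1$ when $k$ is odd and $0$ when $k$ is even (including the degenerate case $k=0$). Hence $\rho(X) \leqslant \rho(X \cup \{x\}) \leqslant \rho(X)+1$, giving (R2).

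The main obstacle is conceptual rather than computational: one must resist reproducing the case analysis used in the pair version of Theorem \ref{pairdm}, which for a flag of length $m$ would blow up into $2^m$ cases, and instead extract the genuine structural content of the chain condition as the monotonicity of the $d_i$. Once this is isolated, the alternating sum collapses at sight. This viewpoint also clarifies why the theorem is stated as a one-way implication: many non-monotone patterns of $(d_1,\ldots,d_m)$ still produce alternating sums in $\{0,1\}$, so strict nesting of the $\E_{M_i}$'s is sufficient but far from necessary.
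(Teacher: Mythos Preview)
Your proof is correct and follows essentially the same approach as the paper's. The paper introduces the minimal index $j$ for which $(X\cup\{x\},x)\in\E_{M_j}$, so that $r_i(X\cup\{x\})=r_i(X)+1$ for $i<j$ and $r_i(X\cup\{x\})=r_i(X)$ for $i\geqslant j$, and then computes $\rho(X\cup\{x\})-\rho(X)=\sum_{i=1}^{j-1}(-1)^{i+1}\in\{0,1\}$; your $k$ is precisely the paper's $j-1$, and your $d_i$ formulation is just a repackaging of the same block-of-ones/block-of-zeros observation.
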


\begin{proof} For simplicity, we write $\E_j$ for $\E_{M_j}$. The fact that $\rho(\emptyset)=0$ is trivial. Now, let $X \subset E$ and $x \in E$. Let $j$ be minimal such that $(X \cup \{x\},x) \in \E_j \- \E_{j-1}$ with the convention that $j=1$ if $(X \cup \{x\},x) \in \E_1$ and $j=m+1$ if $(X \cup \{x\},x) \not \in \E_m$.Then we have \begin{eqnarray*} \rho(X \cup \{x\}) &=& \sum_{i=1}^m (-1)^{i+1} r_i(X \cup \{x\}) \\&=& \sum_{i=1}^{j-1} (-1)^{i+1} r_i(X \cup \{x\})+\sum_{i=j}^m (-1)^{i+1} r_i(X \cup \{x\}) \\ 
 \\&=& \sum_{i=1}^{j-1} (-1)^{i+1} [r_i(X) +1]+\sum_{i=j}^m (-1)^{i+1} r_i(X ) \\ &=& \rho(X) + \sum_{i=1}^{j-1}(-1)^{i+1}
\end{eqnarray*} Independently on the parity of $j$, we always have \[0 \leqslant   \rho(X \cup \{x\}) -\rho(X) \leqslant 1.\]
\qed\end{proof}

\subsection{Flags of almost affine codes}

\begin{definition} A flag $F=(C_1,\cdots,C_m)$ of almost affine codes is a finite set of almost affine codes on the same alphabet and same length, with the property that for $1\leqslant j \leqslant m-1$, $C_{j+1}$ is an almost affine  subcode of $C_j$. A pair of almost affine codes is a flag with two codes.
\end{definition}

We will see in this section that flags of almost affine codes give rise to demi-matroids in a natural way. We will also look at the duals of these demi-matroids. We start with a lemma.

\begin{lemma}\label{lemma1} Let $C$ be an almost affine code, $\bm{w} \in C$. Let $X \subset E$ and $x \in E\-X$. Then \[r(X \cup\{x\}) = r(X) \Leftrightarrow C(X \cup\{x\},\bm{w}) = C(X,\bm{w}).\]
\end{lemma}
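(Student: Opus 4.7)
The plan is to prove the lemma by a simple cardinality argument, leveraging Proposition~\ref{prop2} which tells us exactly the size of $C(X,\bm{w})$ in terms of the matroid rank function. The essential observation is that there is always an inclusion $C(X\cup\{x\},\bm{w}) \subset C(X,\bm{w})$, since requiring agreement with $\bm{w}$ on the larger set $X\cup\{x\}$ is a stricter condition than agreement on $X$. Hence equality of the two subcodes is equivalent to equality of their cardinalities.

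For the forward direction, I would assume $r(X\cup\{x\}) = r(X)$. By Proposition~\ref{prop2},
\[ |C(X\cup\{x\},\bm{w})| = |A|^{k-r(X\cup\{x\})} = |A|^{k-r(X)} = |C(X,\bm{w})|. \]
Combined with the inclusion $C(X\cup\{x\},\bm{w}) \subset C(X,\bm{w})$ (both sets being finite), this forces equality.

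For the reverse direction, if $C(X\cup\{x\},\bm{w}) = C(X,\bm{w})$, then their cardinalities agree, and again by Proposition~\ref{prop2},
\[ |A|^{k-r(X\cup\{x\})} = |A|^{k-r(X)}. \]
Since $|A|\geqslant 2$ and $k-r(X),k-r(X\cup\{x\})$ are nonnegative integers, this yields $r(X\cup\{x\}) = r(X)$.

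There is no real obstacle here: the lemma is essentially a restatement of Proposition~\ref{prop2} for the special case where the set $X$ is enlarged by one element, using that the containment of the two subcodes is automatic. The only small care needed is to ensure we may apply Proposition~\ref{prop2} at both $X$ and $X\cup\{x\}$, which is fine because $\bm{w} \in C$ means in particular $\bm{w}_X$ and $\bm{w}_{X\cup\{x\}}$ are restrictions of a genuine codeword, so neither $C(X,\bm{w})$ nor $C(X\cup\{x\},\bm{w})$ is empty.
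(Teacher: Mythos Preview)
Your proof is correct and follows exactly the same approach as the paper: note the automatic inclusion $C(X\cup\{x\},\bm{w}) \subset C(X,\bm{w})$ and then apply Proposition~\ref{prop2} to compare cardinalities. The paper's proof is simply a two-line version of what you have written out in full.
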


\begin{proof} We always have $C(X \cup\{x\},\bm{w}) \subset C(X,\bm{w}).$ Then it is a direct consequence of Proposition~\ref{prop2}. 
\qed\end{proof}

\begin{lemma}\label{lemma2} Let $(C_1,C_2)$ be a pair of almost affine codes. Let $\bm{w} \in C_2$, and $X \subset E$. Then \[C_2(X,\bm{w}) = C_2 \cap C_1(X,\bm{w}).\]
\end{lemma}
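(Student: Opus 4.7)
The statement is a straightforward set-theoretic identity once one unwinds Definition~\ref{fixed}; the only substantive input is that $C_2$ is an almost affine subcode of $C_1$, so $C_2 \subset C_1$, and that the base vector $\bm{w}$, being in $C_2$, is therefore also in $C_1$, so that both $C_1(X,\bm{w})$ and $C_2(X,\bm{w})$ are legitimately defined. The plan is to verify both inclusions directly from the definitions.

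First I would spell out the two defining sets. By Definition~\ref{fixed},
\[
C_2(X,\bm{w}) \;=\; \{\bm{v}\in C_2 : \bm{v}_X = \bm{w}_X\}, \qquad
C_1(X,\bm{w}) \;=\; \{\bm{v}\in C_1 : \bm{v}_X = \bm{w}_X\},
\]
where the second set is well defined because $\bm{w}\in C_2 \subset C_1$.

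For the forward inclusion, if $\bm{v}\in C_2(X,\bm{w})$ then $\bm{v}\in C_2$ and $\bm{v}_X=\bm{w}_X$; since $C_2 \subset C_1$ we also get $\bm{v}\in C_1$, hence $\bm{v}\in C_1(X,\bm{w})$, and combined with $\bm{v}\in C_2$ this gives $\bm{v}\in C_2 \cap C_1(X,\bm{w})$. For the reverse inclusion, if $\bm{v}\in C_2 \cap C_1(X,\bm{w})$ then in particular $\bm{v}\in C_2$ and $\bm{v}_X = \bm{w}_X$, which by definition means $\bm{v}\in C_2(X,\bm{w})$.

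There is essentially no obstacle here; the only thing to be careful about is the side remark that $C_2(X,\bm{w})$ and $C_1(X,\bm{w})$ are themselves almost affine subcodes (by Proposition~\ref{prop2}), which is not needed for the set equality but is what makes the lemma useful in the sequel, for example when one wants to compare ranks of the associated matroids via the identity just proved.
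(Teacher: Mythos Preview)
Your proof is correct and matches the paper's own argument, which simply reads ``This is obvious.'' You have merely spelled out the two inclusions that make the identity obvious; nothing more is needed.
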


\begin{proof} This is obvious.
\qed\end{proof}

Given a flag of almost affine codes $F=(C_1, \cdots ,C_m)$ of length $n$, define the following function on the power set of $E=\{1,\cdots,n\}$ by \[\rho_F(X) = \sum_{i=1}^m (-1)^{i+1} r_i(X)\] for $X \subset E$, where $r_i$ is the rank function of the matroid associated to $C_i$.

\begin{theorem} Let $F=(C_1 , \cdots , C_m)$ be a flag of almost affine codes. Then the pair $(E,\rho_F)$ defined above is a demi-matroid.
\end{theorem}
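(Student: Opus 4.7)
The plan is to reduce the statement to Theorem \ref{flagdm}, which already handles the matroid-theoretic combinatorics. So I need to verify two things about the flag $(M_{C_1}, \ldots, M_{C_m})$ of matroids associated to the codes:

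\textbf{Step 1: The $M_{C_i}$ form a flag of matroids.} Since $C_{i+1}$ is an almost affine subcode of $C_i$, the puncturing $(C_{i+1})_X$ is contained in $(C_i)_X$ for every $X \subset E$, so $|(C_{i+1})_X| \leqslant |(C_i)_X|$. Taking logarithms gives $r_{i+1}(X) \leqslant r_i(X)$, so the chain $r_m \leqslant \cdots \leqslant r_1$ required in the definition of a flag holds pointwise.

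\textbf{Step 2: The sets $\mathcal{E}_{M_{C_i}}$ are nested.} This is the main point, and I would argue it by translating the condition $r_i(X\-\{x\}) = r_i(X)$ back into a statement about fixed-coordinate subcodes via Lemma~\ref{lemma1}. Fix $i$ and suppose $(X,x) \in \mathcal{E}_{M_{C_i}}$, i.e.\ $x \in X$ and $r_i(X\-\{x\}) = r_i(X)$. Setting $Y = X\-\{x\}$, Lemma~\ref{lemma1} applied to $C_i$ gives $C_i(X,\bm{w}) = C_i(Y,\bm{w})$ for any $\bm{w} \in C_i$; this holds in particular for any $\bm{v} \in C_{i+1} \subset C_i$. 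Using Lemma~\ref{lemma2} for $C_{i+1}$, one has
\[
C_{i+1}(X,\bm{v}) = C_{i+1} \cap C_i(X,\bm{v}) = C_{i+1} \cap C_i(Y,\bm{v}) = C_{i+1}(Y,\bm{v}).
\]
Applying Lemma~\ref{lemma1} in the other direction, now to $C_{i+1}$, gives $r_{i+1}(X) = r_{i+1}(Y)$, i.e.\ $(X,x) \in \mathcal{E}_{M_{C_{i+1}}}$. Iterating yields $\mathcal{E}_{M_{C_1}} \subset \mathcal{E}_{M_{C_2}} \subset \cdots \subset \mathcal{E}_{M_{C_m}}$.

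\textbf{Step 3: Invoke Theorem \ref{flagdm}.} With both hypotheses of that theorem verified for the flag $(M_{C_1},\ldots,M_{C_m})$, the alternating sum $\rho_F$ is precisely the function $\rho$ produced there, and hence $(E,\rho_F)$ is a demi-matroid.

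The only non-routine ingredient is Step 2: the passage from a rank equality in $M_{C_i}$ to the same equality in the smaller code's matroid. The obstacle, if any, is conceptual rather than computational — one has to notice that Lemma~\ref{lemma1} is an ``if and only if'' so it can be used in both directions, and that Lemma~\ref{lemma2} makes the intersection with $C_{i+1}$ behave well with the fixing-of-coordinates operation. Once that is observed, the inclusion of essential sets is immediate, and everything else is bookkeeping already carried out in Theorem \ref{flagdm}.
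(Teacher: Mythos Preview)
Your proof is correct and follows essentially the same route as the paper: reduce to Theorem~\ref{flagdm} by showing $\mathcal{E}_{M_{C_i}} \subset \mathcal{E}_{M_{C_{i+1}}}$ via Lemma~\ref{lemma1}, then Lemma~\ref{lemma2}, then Lemma~\ref{lemma1} again. Your explicit verification in Step~1 that the $r_i$ are pointwise ordered is a small addition the paper leaves implicit, but otherwise the arguments coincide.
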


\begin{proof} As before, we write $\E_j$ instead of $\E_{M_j}$, where $M_j$ is the matroid associated to $C_j$. By Theorem~\ref{flagdm}, it suffices to show that $\E_{j} \subset \E_{j+1}$, for $1\leqslant j \leqslant m-1$. Let $(X,x) \in \E_j$ and $\bm{w} \in C_{j+1}\subset C_j$. From Lemma~\ref{lemma1}, this means that \[C_j(X\-\{x\},\bm{w}) = C_j(X,\bm{w}).\] 
Then by Lemma~\ref{lemma2}, \[C_{j+1}(X\-\{x\},\bm{w}) = C_{j+1}(X,\bm{w}),\] and by Lemma~\ref{lemma1} again, $(X,x) \in \E_{j+1}.$
\qed\end{proof}

Almost affine codes generally do not have duals. Their associated (demi-)matroids have, and we could have defined functions using $r_i^*$, $\overline{r_i}$ or $\overline{r_i^*}$. In the sequel we see what are the relations between these alternative functions.

\begin{lemma}
Let $M_1=(E,r_1)$ and $M_2=(E,r_2)$ be two demi-matroids. Then \[\E_{M_1} \subset \E_{M_2} \Leftrightarrow \E_{M_2^*} \subset \E_{M_1^*} \Leftrightarrow \E_{\overline{M_1}} \subset \E_{\overline{M_2}}.\] 
\end{lemma}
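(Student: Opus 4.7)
The proof rests on two structural identities for a single demi-matroid $M=(E,r)$, from which both equivalences follow by a formal manipulation. Let $\mathcal{A}$ denote the collection of pairs $(X,x)$ with $X\subset E$ and $x\in X$, and define the involution $\tau:\mathcal{A}\to\mathcal{A}$ by $\tau(X,x)=\bigl((E\-X)\cup\{x\},\,x\bigr)$. Note that $\tau$ does not depend on the rank function; it is an involution because $x\in X$ forces $x\notin E\-X$.

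The first identity I would establish is that $\tau$ restricts to a bijection $\E_{\overline{M}}\to\E_M$. Unwinding $\overline{r}(Y)=r(E)-r(E\-Y)$, the condition $(X,x)\in\E_{\overline{M}}$ simplifies to $r\bigl((E\-X)\cup\{x\}\bigr)=r(E\-X)$, and since $E\-X=\bigl((E\-X)\cup\{x\}\bigr)\-\{x\}$, this is exactly the condition that $\tau(X,x)\in\E_M$.

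The second identity relates $\E_{M^*}$ to $\E_{\overline{M}}$. From $r^*(X)=|X|+r(E\-X)-r(E)=|X|-\overline{r}(X)$ one obtains $r^*(X)-r^*(X\-\{x\})=1-\bigl(\overline{r}(X)-\overline{r}(X\-\{x\})\bigr)$. Both differences on the right lie in $\{0,1\}$ by (R2), so $(X,x)\in\E_{M^*}$ precisely when $(X,x)\notin\E_{\overline{M}}$; equivalently, $\E_{M^*}=\mathcal{A}\-\E_{\overline{M}}$.

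With these two identities in hand, the triple equivalence is formal. Since $\tau$ is an $M$-independent bijection on $\mathcal{A}$, the inclusion $\E_{M_1}\subset\E_{M_2}$ is equivalent to $\E_{\overline{M_1}}\subset\E_{\overline{M_2}}$. Taking complements inside $\mathcal{A}$ reverses inclusions and turns each $\E_{\overline{M_i}}$ into $\E_{M_i^*}$, giving equivalence with $\E_{M_2^*}\subset\E_{M_1^*}$. The only mildly fiddly step is the bookkeeping in the first identity, where one must check that $\tau$ translates the rank condition for $\overline{M}$ precisely into the one for $M$; everything else is mechanical.
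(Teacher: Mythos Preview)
Your proof is correct. The computations you do are exactly those in the paper's proof, but you organise them differently: the paper takes $(X,x)\in\E_{M_2^*}$ and chases it directly to $(X,x)\in\E_{M_1^*}$ (and likewise for the $\overline{M}$ case), then invokes involutivity of the two dualities for the reverse implications. You instead extract two single-demi-matroid identities first --- that $\tau$ carries $\E_{\overline{M}}$ bijectively onto $\E_M$, and that $\E_{M^*}$ is the complement of $\E_{\overline{M}}$ in $\mathcal{A}$ --- and then the two-matroid equivalences are purely set-theoretic consequences (bijections preserve inclusion, complements reverse it). This buys a cleaner explanation of \emph{why} the inclusion reverses for $M^*$ but not for $\overline{M}$, and makes explicit that nothing beyond single-demi-matroid facts is needed; the paper's version is the same argument with the structural layer left implicit.
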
 

\begin{proof} Suppose that $\E_{M_1} \subset \E_{M_2}$. Let $(X,x) \in \E_{M_2^*}$. Then \[|X\-\{x\}| + r_2(E\-(X\-\{x\})) - r_2(E) = |X|+r_2(E\-X)-r_2(E),\] that is \[r_2(E\-X) = r_2((E\-X) \cup \{x\}) -1.\] In other words, $((E\-X)\cup\{x\}) \not \in \E_{M_2}$ and consequently $((E\-X)\cup\{x\}) \not \in \E_{M_1}$ either, that is \[r_1(E\-X) = r_1((E\-X) \cup \{x\}) -1\] by the definition of $\E_{M_1}$ and (R2).    Then \begin{eqnarray*} r_1^*(X\-\{x\}) &=& |X\-\{x\}| + r_1(E\-(X\-\{x\})) - r_1(E) \\&=& |X|-1 + r_1(E\-X)+1 -r_1(E) \\&=& r_1^*(X),
\end{eqnarray*} that is $(X,x) \in \E_{M_1^*}.$
Let now $(X,x) \in \E_{\overline{M_1}}$, that is \[r_1(E)-r_1(E\-(X\-\{x\})) = \overline{r_1}(X\-\{x\}) = \overline{r_1}(X) = r_1(E)-r_1(E\-X).\] Then $((E\-X) \cup\{x\},x) \in \E_{M_1} \subset \E_{M_2}$, and by the same computation for $\overline{r_2}$, $(X,x) \in \E_{\overline{M_2}}$.\\
The two other implications follow by duality.
\qed\end{proof}

\begin{proposition} \label{mainprop}
Let $F=(C_1,\cdots,C_m)$ be a flag of almost affine codes, with $r_i$ be the rank function of the matroid associated to the code $C_i$ for $1\leqslant i\leqslant m$. Define the following functions:
\begin{eqnarray*} \eta_F = \sum_{i=1}^m(-1)^{m-i}r_i^* , \\
\theta_F= \sum_{i=1}^m(-1)^{i+1}\overline{r_i},\\
\pi_F= \sum_{i=1}^m(-1)^{m-i}\overline{r_i^*}.
\end{eqnarray*} Then $(E,\eta_F)$, $(E,\theta_F)$ and $(E,\pi_F)$ are all demi-matroids. Moreover, we  have the following duality relations: \[\begin{array}{ccc} \eta_F=\left\{\begin{array}{ll}\overline{\rho_F} & \textrm{ if }m\textrm{ is even} \\ \rho_F^*& \textrm{ if }m\textrm{ is odd}  \end{array}\right., & \theta_F=\overline{\rho_F},&   \pi_F=\left\{\begin{array}{ll}{\rho_F} & \textrm{ if }m\textrm{ is even} \\ \overline{\rho_F}^*& \textrm{ if }m\textrm{ is odd}  \end{array}\right..\end{array}\]
\end{proposition}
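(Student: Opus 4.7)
The plan is to first establish the three duality identities by direct computation, and then obtain the demi-matroid property for $\eta_F$, $\theta_F$, $\pi_F$ as a free corollary via Propositions~\ref{prop:firstdual} and~\ref{prop:seconddual}, since the preceding theorem already tells us that $\rho_F$ is a demi-matroid rank function.

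The identity $\theta_F = \overline{\rho_F}$ is the easiest and I would do it first. One just expands $\overline{r_i}(X) = r_i(E) - r_i(E \setminus X)$ and pulls both the alternating sum over $r_i(E)$ and the alternating sum over $r_i(E\setminus X)$ out of the definition of $\theta_F$; they are exactly $\rho_F(E)$ and $\rho_F(E\setminus X)$, so the difference is $\overline{\rho_F}(X)$.

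For $\eta_F$, I would expand $r_i^*(X) = |X| + r_i(E\setminus X) - r_i(E)$ and split the sum into three pieces. The coefficient of $|X|$ is $\sum_{i=1}^m(-1)^{m-i}$, which is $1$ when $m$ is odd and $0$ when $m$ is even; this is precisely what distinguishes the two claimed expressions. When $m$ is odd, $(-1)^{m-i} = (-1)^{i+1}$, so the remaining two pieces assemble into $\rho_F(E\setminus X) - \rho_F(E)$, giving $\eta_F(X) = |X| + \rho_F(E\setminus X) - \rho_F(E) = \rho_F^*(X)$. When $m$ is even, $(-1)^{m-i} = -(-1)^{i+1}$ and the $|X|$-term disappears, so the sum becomes $\rho_F(E) - \rho_F(E\setminus X) = \overline{\rho_F}(X)$. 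The treatment of $\pi_F$ is analogous; the starting observation is that for any matroid rank function one has $\overline{r^*}(X) = |X| - r(X)$, obtained by combining the definitions of $r^*$ and $\overline{(\cdot)}$ and simplifying. Substituting this into the definition of $\pi_F$ reduces everything to the same parity analysis applied to $\sum(-1)^{m-i}(|X| - r_i(X))$, and a direct check against the definitions of $\rho_F$ and $\overline{\rho_F}^*$ yields the two claimed cases.

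With the identities in hand, the demi-matroid assertions for $\eta_F$, $\theta_F$, $\pi_F$ follow at once: $\rho_F$ is a demi-matroid rank function by the preceding theorem, and Propositions~\ref{prop:firstdual} and~\ref{prop:seconddual} ensure that $\rho_F^*$, $\overline{\rho_F}$ and $\overline{\rho_F}^* = \overline{\rho_F^*}$ are demi-matroid rank functions as well. The main obstacle is purely bookkeeping — keeping track of the sign $(-1)^{m-i}$ versus $(-1)^{i+1}$ and of the parity of $m$ — since all the nontrivial structure (the inequalities (R1) and (R2) for the alternating sum) has already been absorbed into the preceding theorem on flags of almost affine codes.
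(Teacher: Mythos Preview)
Your argument is correct, and the computation of the duality identities is essentially identical to the paper's (the paper only spells out $\eta_F$ and then says ``the other equalities are done in a similar way'').

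Where you differ is in the \emph{order} and in how the demi-matroid property is obtained. The paper first proves directly that $(E,\eta_F)$, $(E,\theta_F)$, $(E,\pi_F)$ are demi-matroids: it invokes the preceding lemma, which says that $\E_{M_1}\subset\E_{M_2}$ is equivalent to $\E_{M_2^*}\subset\E_{M_1^*}$ and to $\E_{\overline{M_1}}\subset\E_{\overline{M_2}}$, so that the chain of $\E$-inclusions established for the flag $(C_1,\dots,C_m)$ transfers to the dual and supplement-dual matroids; Theorem~\ref{flagdm} then applies to those flags and gives the demi-matroid conclusion independently of the identities. Only afterwards does the paper verify the duality formulas. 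Your route bypasses the $\E$-lemma entirely: once the identities are established, the demi-matroid property drops out of Propositions~\ref{prop:firstdual} and~\ref{prop:seconddual} applied to $\rho_F$. This is slightly more economical, since it renders the $\E$-duality lemma unnecessary for the proposition at hand; on the other hand, the paper's approach shows that each of the three alternating sums is a demi-matroid in its own right, without reference to $\rho_F$, which is a mildly stronger statement of independence even if the end result is the same.
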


\begin{proof}
The first part of the proposition is an easy adaptation of the proof of the previous theorem, together with the previous lemma. For the second part, we compute \begin{eqnarray*} \eta_F(X) &=& \sum_{i=1}^m(-1)^{m-i} r_i^*(X) \\ &=& \sum_{i=1}^m(-1)^{m-i}\left[ |X| + r_i(E\-X)-r_i(E)\right] \\ &=& \sum_{i=1}^m(-1)^{m-i}|X| + \sum_{i=1}^m(-1)^{m-i}r_i(E\-X) -\sum_{i=1}^m(-1)^{m-i}r_i(E) \\&=& \left\{\begin{array}{ll} 0-\sum_{i=1}^{m}(-1)^{i+1}r_i(E\-X) + \sum_{i=1}^m(-1)^{i+1} r_i(E) & \textrm{ if }m\textrm{ is even} \\ 
|X| + \sum_{i=1}^{m}(-1)^{i+1}r_i(E\-X) - \sum_{i=1}^m(-1)^{i+1} r_i(E)& \textrm{ if }m\textrm{ is odd }  
\end{array}\right.
\\&=&\left\{\begin{array}{ll}\overline{\rho_F} & \textrm{ if }m\textrm{ is even} \\ \rho_F^*& \textrm{ if }m\textrm{ is odd}  \end{array}\right. 
\end{eqnarray*} The other equalities are done in a similar way.
\qed\end{proof}

\begin{remark} \label{expl} The corollary generalizes~\cite[Theorem 9]{BJM}, where the corresponding result was proven for flags of linear codes. In Part (b)
the $r_i^*$ are rank functions of the dual codes (orthogonal complements) $C_i^\perp$ if the $C_i$ are linear (or even multilinear) codes. If we only know that the $C_i$ are almost affine, we do not necessarily have dual codes, for which the $r_i^*$ are rank functions. See \cite{JV}. In Section \ref{meat}, where we give the results and applications that we hope are the most interesting ones, it is only the case $m=2$ that is considered. The case of longer flags ($m\geqslant 3$) was included above for completeness, and for possible usage in new, e.g. cryptological, applications that might turn up in the future (although we unfortunately have not found applications for chains of length $3$ or more yet).
\end{remark}

%
%
%
%

\section{Two-party wire tap channel of type II} \label{meat}

In~\cite{OW}, Ozarow and Wyner introduce the wire-tap channel of type II. The idea is to encode a message without the use of any key and send it over a channel where an intruder can listen to a subset of the transmitted symbols. The goal of the encoder is to minimize  the information about the original message the intruder can get. The authors propose a scheme using linear codes. In~\cite{wei91}, the amount of information collected by the intruder is related to the generalized Hamming weights of the dual of the linear code. In~\cite{JV}, we give a scheme that uses almost affine codes, and give an analogue of Wei's result.\\
In~\cite{LMVC}, the authors look at the following two-party wiretap channel of type II: this time, the intruder is able to get some symbols of the original message, as well as to listen to a subset of the transmitted symbols. In this section, we investigate the same scenario for our scheme.\\

\subsection{Wiretap channel of type II for almost affine codes}

\noindent We start by recalling our scheme described in~\cite{JV}: let $C$ be an almost affine code of length $n$ and rank $k$ on the alphabet $A$. Let $B \subset E=\{1,\cdots,n\}$ be a basis of the matroid $M_C$. Let $\varphi: A \times A \longrightarrow A$ be a function such that for every $y \in A$, $\varphi_{y,1}=\varphi(y, -)$ and $\varphi_{y,2}=\varphi(-,y)$ are bijections. For $\bm{m} \in A^{E\-B}$, let $\Phi_{\bm{m}} : A^E \longrightarrow A^E$ be defined the following way: if $\bm{w} \in A^E$, then \[\Phi_{\bm{m}}(\bm{w})_i = \left\{ \begin{array}{ll} \bm{w}_ i & \textrm{if } i \in B \\ \varphi(\bm{w}_i,\bm{m}_i) & \textrm{ if } i \in E\-B\end{array}\right.\] We can then define, for $\bm{m} \in A^{E\-B}$, \[C_{\varphi,\bm{m}} = \Phi_{\bm{m}}(C).\] Whenever $\varphi$ is obvious from the context, we may omit it and write $C_{\bm{m}}$ for $C_{\varphi,\bm{m}}$.
From~\cite{JV}, we have the following: \begin{lemma}\label{Cm} For every $\bm{m} \in A^{E\-B}$, $C_{\bm{m}}$ is an almost affine code of length $n$ and dimension $k$ over $A$, with associated matroid equal to $M_C$. Moreover, the codes $C_{\bm{m}}$ for $\bm{m} \in A^{E\-B}$ form a partition of $A^E$.
\end{lemma}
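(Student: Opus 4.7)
The plan is to verify the three claims separately: that $C_{\bm m}$ is almost affine of length $n$ and dimension $k$, that its associated matroid equals $M_C$, and that the codes $\{C_{\bm m}\}_{\bm m\in A^{E\setminus B}}$ partition $A^E$. The engine behind all three is that $\Phi_{\bm m}$ acts coordinate by coordinate: on $i\in B$ as the identity, and on $i\in E\setminus B$ as the map $a\mapsto\varphi(a,\bm m_i)=\varphi_{\bm m_i,2}(a)$, which is a bijection by hypothesis on $\varphi$. So $\Phi_{\bm m}$ is a bijection $A^E\to A^E$.

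The first two claims fall out of this at once. Bijectivity gives $|C_{\bm m}|=|C|=|A|^k$. For any $X\subset E$, the projection $\Phi_{\bm m}(\bm w)_X$ depends only on $\bm w_X$, so $(C_{\bm m})_X$ is the image of $C_X$ under the corresponding coordinate-wise bijection on $A^X$; hence $|(C_{\bm m})_X|=|C_X|$. Taking $\log_{|A|}$ shows both that $C_{\bm m}$ is almost affine and that its rank function coincides with the rank function of $C$, so the associated matroids are equal to $M_C$.

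For the partition claim I would use that $B$ is a basis of $M_C$: this gives $r(B)=|B|=k$, hence $|C_B|=|A|^k=|A|^{|B|}$, and combined with $C_B\subset A^B$ this forces $C_B=A^B$ and the restriction map $C\to A^B$ to be a bijection. I then show that every $\bm v\in A^E$ lies in exactly one $C_{\bm m}$. For existence, let $\bm w$ be the unique element of $C$ with $\bm w_B=\bm v_B$, and for each $i\in E\setminus B$ use that $\varphi_{\bm w_i,1}=\varphi(\bm w_i,-)$ is a bijection to define $\bm m_i$ as the unique element with $\varphi(\bm w_i,\bm m_i)=\bm v_i$; then $\Phi_{\bm m}(\bm w)=\bm v$. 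For uniqueness, if $\bm v=\Phi_{\bm m}(\bm w)=\Phi_{\bm m'}(\bm w')$, then comparing on $B$ gives $\bm w_B=\bm w'_B$, so $\bm w=\bm w'$ by the bijection above, and then comparing on $E\setminus B$ together with bijectivity of $\varphi_{\bm w_i,1}$ forces $\bm m=\bm m'$. As a sanity check the cardinalities already match: $|A^{E\setminus B}|\cdot|A|^k=|A|^{n-k}\cdot|A|^k=|A|^n=|A^E|$.

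No single step looks genuinely difficult; the only subtlety is recognising that it is precisely the basis property of $B$ that provides the bijection $C\to A^B$, without which neither existence nor uniqueness in the partition argument would go through. The coordinate-wise structure of $\Phi_{\bm m}$ does the rest of the work.
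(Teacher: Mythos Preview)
Your proof is correct. The paper does not actually prove this lemma: it merely imports the statement from \cite{JV} (``From \cite{JV}, we have the following: \ldots''), so there is no in-paper argument to compare against. The direct argument you give---that $\Phi_{\bm m}$ is a coordinate-wise bijection, which immediately yields $|(C_{\bm m})_X|=|C_X|$ for every $X$ and hence both the almost-affine property and the equality of matroids, together with the use of the basis property of $B$ to get the bijection $C\to A^B$ for the partition claim---is the natural self-contained proof and is presumably what the cited paper does as well. Your observation that both bijectivity properties of $\varphi$ (in each argument separately) are needed, one for the matroid claim and the other for the partition claim, is correct and worth recording.
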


The scheme is the following: one wishes to send the message $\bm{m}$ over a channel. The sender chooses uniformly and at random an element  $\bm{w} \in C_{\bm{m}}$. The receiver finds the unique $\bm{m'}$ such that $\bm{w} \in C_{\bm{m'}}$. Then $\bm{m'}=\bm{m}$.

In~\cite{JV}, an intruder was able to listen to up to $\mu$ of the $n$ symbols sent over the channel. Then it is shown that whenever the intruder is able to listen up to $d^*_{j+1}$ symbols of the transmitted message, then he has knowledge of at most $j$ symbols of the original message, where $d_j^*$ are the generalized Hamming weights of $M_C^*$.

\subsection{Two-party wiretap channel of type II for almost affine codes}

In the two-party wiretap channel of type II, the intruder is able to get a subset $X \subset E\-B$ of the original message $\bm{m}$, and to tap a subset $Y \subset E$ of cardinality $\mu$ of the transmitted message over the channel. We will now look into the equivocation of the system, that is the maximum of information gained by the intruder after these taps.

\subsubsection{An almost affine overcode}

Let $\bm{M} \in A^X$. Define \[D_{X,\bm{M}} = \bigcup_{\begin{array}{c} \bm{m} \in A^{E\-B} \\ \bm{m}_X = \bm{M} \end{array}} C_{\bm{m}}.\]

\begin{lemma} For every $X \subset E\-B$ and every $\bm{M} \in A^X$, $D_{X,\bm{M}}$ is an almost affine code of length $n$ and rank $n-|X|$ on $A$. Its associated matroid $M_D$ has rank function \[r_{D_{X,\bm{M}}}(Y) =|Y\backslash(B\cup X))|+r(Y\cap (B\cup X)).\] It is independent of $\bm{M}$.
\end{lemma}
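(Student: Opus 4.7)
The plan is to give an explicit membership criterion for $D_{X,\bm{M}}$ and then compute punctured images directly.

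Since by Lemma~\ref{Cm} the codes $C_{\bm{m}}$ partition $A^E$, the union defining $D_{X,\bm{M}}$ is disjoint, giving immediately $|D_{X,\bm{M}}| = |A|^{n-k-|X|}\cdot|A|^k = |A|^{n-|X|}$. This matches the claimed rank $n-|X|$, but I still need the almost-affine property and the full rank function.

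I would unpack the condition ``$\bm{v}\in C_{\bm{m}}$'', i.e.\ $\bm{c}:=\Phi_{\bm{m}}^{-1}(\bm{v})\in C$, coordinate by coordinate. On $B$ the map $\Phi_{\bm{m}}$ is the identity, so $\bm{c}_B=\bm{v}_B$. On $X$ the constraint $\bm{m}_X=\bm{M}$ together with the bijectivity of $\varphi_{\bm{M}_i,2}$ force $\bm{c}_i=\varphi_{\bm{M}_i,2}^{-1}(\bm{v}_i)$. On $(E\-B)\-X$ the coordinate $\bm{m}_i$ is free, and the bijectivity of $\varphi_{\bm{c}_i,1}$ shows that any prescribed $\bm{c}_i\in A$ is realized by a unique $\bm{m}_i$. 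Hence $\bm{v}\in D_{X,\bm{M}}$ if and only if the pair $(\bm{v}_B,\psi^{-1}(\bm{v}_X))\in A^{B\cup X}$ extends to a codeword of $C$, where $\psi:A^X\to A^X$ is the componentwise bijection $\psi(\bm{a})_i=\varphi(\bm{a}_i,\bm{M}_i)$. Since $B$ is a basis, $r(B\cup X)=k$, so $|C_{B\cup X}|=|A|^k=|C|$ and the projection $C\to C_{B\cup X}$ is a bijection; thus the condition simplifies to $(\bm{v}_B,\psi^{-1}(\bm{v}_X))\in C_{B\cup X}$, with the coordinates on $E\-(B\cup X)$ unconstrained.

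For the rank function I would split a general $Y\subset E$ as $Y_1\sqcup Y_2$ with $Y_1=Y\-(B\cup X)$ and $Y_2=Y\cap(B\cup X)$. The membership criterion constrains only coordinates in $B\cup X$, so the $Y_1$-projection of $D_{X,\bm{M}}$ is all of $A^{Y_1}$, contributing a factor $|A|^{|Y_1|}$. The $Y_2$-projection is the image of $C_{Y_2}$ under the componentwise bijection that is the identity on $Y_2\cap B$ and $\varphi(-,\bm{M}_i)$ on $Y_2\cap X$, and therefore has cardinality $|C_{Y_2}|=|A|^{r(Y_2)}$. Since the two groups of coordinates vary independently,
\[ |(D_{X,\bm{M}})_Y| \;=\; |A|^{|Y\-(B\cup X)|+r(Y\cap(B\cup X))}, \]
a power of $|A|$ with integer exponent. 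This establishes that $D_{X,\bm{M}}$ is almost affine with the asserted rank function; setting $Y=E$ recovers the dimension $n-|X|$, and since the formula has no $\bm{M}$-dependence the associated matroid is independent of $\bm{M}$.

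The only mildly delicate point is keeping careful track of the bijection $\psi$ that $\bm{M}$ induces on the $X$-coordinates; once it is isolated, it preserves cardinalities in every puncturing and drops out of both the dimension count and the rank function.
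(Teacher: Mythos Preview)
Your argument is correct. The membership criterion you extract---that $\bm{v}\in D_{X,\bm{M}}$ if and only if $(\bm{v}_B,\psi^{-1}(\bm{v}_X))\in C_{B\cup X}$, with the coordinates on $E\-(B\cup X)$ free---is valid, and the product structure it yields does give the punctured cardinalities immediately, since the componentwise bijection $\psi$ commutes with projection and hence preserves $|C_{Y_2}|$.

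The paper proceeds a bit differently. It uses the same splitting $Y=(Y\cap(B\cup X))\sqcup(Y\-(B\cup X))$ (with the labels $Y_1,Y_2$ swapped relative to yours), but instead of formulating a global membership criterion it picks a maximal independent subset $I$ of $Y\cap(B\cup X)$ for $M_C$ (chosen to contain $Y\cap B$) and shows directly that projection of $D_Y$ onto the coordinates $I\cup(Y\-(B\cup X))$ is a bijection: surjectivity by an explicit construction of a word with prescribed values on those coordinates, and injectivity via Proposition~\ref{prop2} applied to $C_{Y\cap(B\cup X)}(I,\cdot)$. Your route avoids choosing the auxiliary independent set $I$ and the separate surjectivity/injectivity verification, at the price of first isolating the bijection $\psi$ and checking that it commutes with puncturing; the paper's route is more hands-on but makes no appeal to such a global description. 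Both arrive at the same formula $|D_Y|=|A|^{|Y\-(B\cup X)|+r(Y\cap(B\cup X))}$, and both make the $\bm{M}$-independence evident at the end.
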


\begin{proof} For simplicity, we write $D$ for $D_{X,\bm{M}}$. Let $Y \subset E$. We have to show that $|D_Y|$ is a power of $|A|$.
We define the following sets:\begin{itemize} \item $Y_1 = Y\cap (X \cup B)$,
\item $Y_2= Y\-Y_1$,
\item $I \subset Y_1$ a maximal independent set of $Y_1$ for the matroid associated to $C$ (and therefore all $C_{\bm{m}}$). Since $B$ is a basis of the matroid, we may even assume that $Y_1 \cap B \subset I$.
\item $J=Y_1 \- I$,
\item $I_B=B \cap I=B \cap Y$,
\item $I_X=X \cap I=I\-I_B$.
\end{itemize} The idea is to show that on one hand we have full freedom on the choice of the characters on $Y_2$ and $I$ for the words of $D_Y$, the first one because of the choice of $\bm{m}$, the second one because $I$ is an independent set, which shows that we have  $|D_Y|\geqslant |A|^{|Y_2\cup I|}$. On the other hand, we show that if two words of $D_Y$ agree on $Y_2 \cup I$, then they must agree on the rest of $Y$ as well.

So let $\bm{a}=(a_i)_{i \in Y_2 \cup I} \in A^{Y_2 \cup I}$. Let $\bm{b}=(b_j)_{j \in I} \in A^I$ defined by \[b_j=\left\{ \begin{array}{ll}a_j& \textrm{ if } j \in I_B \\ 
\varphi^{-1}_{\bm{M}_j,2}(a_j) 
& \textrm{ if }j\in I_X\end{array}\right.\]

Since $I$ is an independent set for the matroid $M_C$, there exists a word $\bm{c}=(c_j)_{j \in E} \in C$ with \[\bm{c}_I = \bm{b}.\]Now, define $\bm{m}=(m_i)_{i \in E\-B}$ in the following way : \[m_i=\left\{ \begin{array}{ll}\bm{M}_i & \textrm{ if }i \in X \\ \varphi^{-1}_{c_i,1}(a_i) & \textrm{ if }i \in Y_2 \\\textrm{randomly} &   \textrm{otherwise} \end{array}\right.\]

Of course, $\bm{m}_X=\bm{M}$. Finally, look at the word $\bm{w}= \Phi_{\bm{m}}(\bm{c}).$ Then, by construction, $\bm{w} \in C_{\bm{m}} \subset D$ and $\bm{w}_{I \cup Y_2} = \bm{a}.$ 

Now, let $\bm{w_1},\bm{w_2} \in D$ such that $(\bm{w_1})_{I \cup Y_2}= (\bm{w_2})_{I \cup Y_2}$. By definition, there exists $\bm{c_1},\bm{c_2} \in C$ and $\bm{m_1},\bm{m_2} \in A^{E\-B}$ with $(\bm{m_1})_X = (\bm{m_2})_X = \bm{M}$ such that $\bm{w_1}=\Phi_{\bm{m_1}}(\bm{c_1})$ and $\bm{w_2}=\Phi_{\bm{m_2}}(\bm{c_2})$.

In particular, we have that \[(\bm{c_1})_{I_B} = (\Phi_{\bm{m_1}}(\bm{c_1}))_{I_B} = (\bm{w_1})_{I_B} = (\bm{w_2})_{I_B} = (\Phi_{\bm{m_2}}(\bm{c_2}))_{I_B} = (\bm{c_2})_{I_B}\] and for every $i \in I_X$, \[\varphi((\bm{c_1})_i,(\bm{m_1})_i) =(\Phi_{\bm{m_1}}(\bm{c_1}))_i= (\bm{w_1})_i = (\bm{w_2})_i =(\Phi_{\bm{m_2}}(\bm{c_2}))_i=  \varphi((\bm{c_2})_i,(\bm{m_2})_i).\] 
Now, since $(\bm{m_1})_i = \bm{M}_i = (\bm{m_2})_i$ and $\varphi_{\bm{M}_i,2}$ is a bijection, this shows that \[(\bm{c_1})_{I_X} = (\bm{c_2})_{I_X}.\] Thus, \[(\bm{c_1})_{I} = (\bm{c_2})_{I}.\] 
Now, by Proposition~\ref{prop2}, we get that \[(\bm{c_1})_{Y_1} = (\bm{c_2})_{Y_1}\] since \[\left|C_{Y_1}(I,(\bm{c_1})_{Y_1})\right| = |A|^{r(Y_1) - r(I)} = 1.\] This concludes the proof that $D$ is an almost affine code since: \begin{itemize} \item for $i \in Y \cap B \subset Y_1$, \[(\bm{w_1})_i = (\bm{c_1})_i = (\bm{c_2})_i = (\bm{w_2})_i,\]
\item for $i \in Y \cap X \subset Y_1,$ \[(\bm{w_1})_i = \varphi((\bm{c_1})_i,(\bm{m_1})_i)= \varphi((\bm{c_1})_i,\bm{M}_i)=\varphi((\bm{c_1})_i,(\bm{m_2})_i)= (\bm{w_2})_i,\]
\item for $i \in Y_2$, \[(\bm{w_1})_i =  (\bm{w_2})_i\] by assumption.
\end{itemize} But this also shows that the rank function of $M_D$ is given by \[ r_{D_{X,\bm{M}}}(Y) = |I \cup Y_2| = r(Y \cap (B \cup X)) + |Y \backslash (B \cup X)| ,\] which in turn shows that this just depends on $X$, not on $\bm{M} \in A^X$.
\qed\end{proof}

\begin{remark} For every $\bm{m} \in A^{E\-B}$ such that $(\bm{m})_X = \bm{M}$, the code $C_{\bm{m}}$ is an almost affine subcode of $D_{X,\bm{M}}$. They define a pair of codes, whose associated demi-matroid will be denoted $(E,\rho_X)$.
\end{remark}

\subsubsection{Conditional entropy of the system}

We want to see how much an intruder gets of information by listening to $\mu$ digits of the sent message, and knowing already a subset $X$ of the digits of the message $\bm{m}$. We will need two lemmas. The first one tells, given that the original message was $\bm{m} \in A^{E\-B}$ and that the intruder listens to $\bm{t} \in A^Y$,  how many possible $\bm{w} \in C_{\bm{m}}$ might have been sent over the channel. The second one tells how many messages $\bm{m} \in A^{E\-B}$ may have possibly be sent, given that $\bm{m}_X=\bm{M}$, and given that the intruder listens   to $\bm{t} \in A^Y$.

\begin{lemma}\label{omega} Let $\bm{M} \in A^X$ and $Y \subset E$. Let $\bm{m} \in A^{E\-B}$ and $\bm{t} \in A^Y$. Define \[\Omega_{\bm{t},Y,\bm{M}}(\bm{m})=\begin{cases} \emptyset & \text{ if } \bm{m}_X \neq \bm{M}, \\ \{\bm{w} \in C_{\bm{m}}, \ \bm{w}_Y = \bm{t}\} & \text{ otherwise}\end{cases}\] Then $\Omega_{\bm{t},Y,\bm{M}}(\bm{m})$ is either empty or has cardinality $|A|^{k-r(Y)}$.
\end{lemma}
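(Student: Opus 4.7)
The plan is straightforward: pull the counting problem back to the original code $C$ via the bijection $\Phi_{\bm{m}}$, and then invoke Proposition~\ref{prop2}.

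The case $\bm{m}_X \neq \bm{M}$ is trivial by definition of $\Omega$, so I focus on the case $\bm{m}_X = \bm{M}$. Since each $\varphi_{y,1}$ is a bijection on $A$, the map $\Phi_{\bm{m}} : A^E \to A^E$ is a bijection: it is the identity on coordinates in $B$, and acts on coordinate $i \in E\setminus B$ as the bijection $\varphi_{\bm{m}_i,1}$. By definition $C_{\bm{m}} = \Phi_{\bm{m}}(C)$, so $\Phi_{\bm{m}}$ restricts to a bijection $C \to C_{\bm{m}}$.

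Next, I would define $\bm{s} \in A^Y$ by $\bm{s}_i = \bm{t}_i$ for $i \in Y \cap B$ and $\bm{s}_i = \varphi^{-1}_{\bm{m}_i,1}(\bm{t}_i)$ for $i \in Y \setminus B$. Note that $\bm{s}$ depends only on $\bm{t}$ and $\bm{m}$. Then for every $\bm{c} \in A^E$, one has $\Phi_{\bm{m}}(\bm{c})_Y = \bm{t}$ if and only if $\bm{c}_Y = \bm{s}$. Consequently the bijection above restricts to a bijection between $\Omega_{\bm{t},Y,\bm{M}}(\bm{m})$ and the set $\{\bm{c} \in C : \bm{c}_Y = \bm{s}\}$.

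Finally, if $\{\bm{c} \in C : \bm{c}_Y = \bm{s}\}$ is empty, so is $\Omega_{\bm{t},Y,\bm{M}}(\bm{m})$. Otherwise, choosing any $\bm{c}_0$ in this set, it coincides with $C(Y,\bm{c}_0)$ in the notation of Definition~\ref{fixed}, and Proposition~\ref{prop2} gives $|C(Y,\bm{c}_0)| = |A|^{k-r(Y)}$, yielding the claim. There is no real obstacle here; the only point to verify carefully is the compatibility of the fixed-coordinate conditions under $\Phi_{\bm{m}}$, which is immediate from the coordinate-wise definition of $\Phi_{\bm{m}}$ and the bijectivity of each $\varphi_{y,1}$.
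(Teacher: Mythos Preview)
Your argument is correct and is essentially what the paper's cited reference does; the paper itself merely defers to \cite[Lemma~9]{JV}. One small notational slip: on a coordinate $i\in E\setminus B$ the map $\Phi_{\bm{m}}$ sends $\bm{w}_i\mapsto\varphi(\bm{w}_i,\bm{m}_i)=\varphi_{\bm{m}_i,2}(\bm{w}_i)$, not $\varphi_{\bm{m}_i,1}$, so your $\bm{s}_i$ should be $\varphi_{\bm{m}_i,2}^{-1}(\bm{t}_i)$; since both partial maps are bijections this does not affect the validity of the proof.
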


\begin{proof} This is essentially the first part of~\cite[Lemma 9]{JV}.
\qed\end{proof}

\begin{lemma}\label{omeganonnul} Let $\bm{M} \in A^X$ and $\bm{w} \in D_{X,\bm{M}}$. Let $Y \subset E$ and $\bm{t}=\bm{w}_Y$. Then \[\left|\{\bm{m} \in A^{E\-B},\ \Omega_{\bm{t},Y,\bm{M}}(\bm{m}) \neq \emptyset\}\right| = |A|^{n-k-|X|-\rho_X(Y)}.\]
\end{lemma}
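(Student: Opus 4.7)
The plan is to count the relevant $\bm m$ by realizing the left-hand side as a quotient of two known cardinalities: the size of the fiber of $\bm{w}$ in the big code $D_{X,\bm{M}}$ (computed by the rank formula in Proposition~\ref{prop2}) and the size of each non-empty $\Omega_{\bm{t},Y,\bm{M}}(\bm{m})$ (given by Lemma~\ref{omega}).

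First I observe that $\Omega_{\bm{t},Y,\bm{M}}(\bm{m})$ is empty whenever $\bm{m}_X\neq\bm{M}$, so the count on the left really ranges over messages $\bm{m}\in A^{E\setminus B}$ with $\bm{m}_X=\bm{M}$. By Lemma~\ref{Cm}, the codes $C_{\bm{m}}$ for such $\bm{m}$ form a partition of $D_{X,\bm{M}}$. Intersecting this partition with the set of words agreeing with $\bm{w}$ on $Y$ gives
\[
D_{X,\bm{M}}(Y,\bm{w}) \;=\; \bigsqcup_{\bm{m}_X=\bm{M}} \Omega_{\bm{t},Y,\bm{M}}(\bm{m}),
\]
a disjoint union in which each non-empty summand has size exactly $|A|^{k-r(Y)}$ by Lemma~\ref{omega}.

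Next I apply Proposition~\ref{prop2} to the almost affine code $D_{X,\bm{M}}$, which has dimension $n-|X|$ and rank function $r_{D_{X,\bm{M}}}$ (previous lemma). Since $\bm{w}\in D_{X,\bm{M}}$, this gives
\[
|D_{X,\bm{M}}(Y,\bm{w})| \;=\; |A|^{(n-|X|)-r_{D_{X,\bm{M}}}(Y)}.
\]
Let $N$ denote the number of $\bm{m}$ with $\Omega_{\bm{t},Y,\bm{M}}(\bm{m})\neq\emptyset$. Combining the two counts yields
\[
N\cdot|A|^{k-r(Y)} \;=\; |A|^{n-|X|-r_{D_{X,\bm{M}}}(Y)},
\]
so that $N=|A|^{n-k-|X|-(r_{D_{X,\bm{M}}}(Y)-r(Y))}$.

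Finally, since the pair $(D_{X,\bm{M}},C_{\bm{m}})$ is exactly the pair whose alternating sum defines the demi-matroid rank function $\rho_X$, we have $\rho_X(Y)=r_{D_{X,\bm{M}}}(Y)-r(Y)$, and substituting gives the claimed $|A|^{n-k-|X|-\rho_X(Y)}$. The only subtle point is checking that each non-empty $\Omega_{\bm{t},Y,\bm{M}}(\bm{m})$ has precisely the value given by Lemma~\ref{omega} (i.e.\ that the intersection with $C_{\bm{m}}$ is genuinely non-trivial for those $\bm{m}$ counted), but this is immediate from the definition since non-emptiness is the defining condition.
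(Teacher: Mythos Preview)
Your proof is correct and follows essentially the same approach as the paper: both compute $|D_{X,\bm{M}}(Y,\bm{w})|$ in two ways, once via the partition into the $C_{\bm{m}}$ (giving $N\cdot|A|^{k-r(Y)}$ by Lemma~\ref{omega}) and once via Proposition~\ref{prop2} applied to the almost affine code $D_{X,\bm{M}}$ of dimension $n-|X|$, then solve for $N$ and identify $r_{D_{X,\bm{M}}}(Y)-r(Y)$ with $\rho_X(Y)$.
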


\begin{proof} We compute $|\{\bm{v} \in D_{X,\bm{M}},\ \bm{v}_Y=\bm{t}\}|$ in two different ways. Since $D_{X,\bm{M}}$ is a disjoint union of $C_{\bm{m}}$, we have \begin{eqnarray*} |\{\bm{v} \in D_{X,\bm{M}},\ \bm{v}_Y=\bm{t}\}| &=& \sum_{\bm{m}\in A^{E_B},\ \bm{m}_X=\bm{M}}|\{\bm{v} \in C_{\bm{m}},\ \bm{v}_Y=\bm{t}\}|\\&=& \sum_{\bm{m}\in A^{E\-B},\ \bm{m}_X=\bm{M}}|\Omega_{\bm{t},Y,\bm{M}}(\bm{m})| \\ &=&\sum_{\bm{m}\in A^{E\-B},\ \Omega_{\bm{t},x,\bm{M}}(\bm{m}) \neq \emptyset}|\Omega_{\bm{t},Y,\bm{M}}(\bm{m})|\\&=& \sum_{\bm{m}\in A^{E\-B},\ \Omega_{\bm{t},x,\bm{M}}(\bm{m}) \neq \emptyset}|A|^{k-r(Y)} \\&=& |\{\bm{m} \in A^{E\-B},\ \Omega_{\bm{t},Y,\bm{M}}(\bm{m}) \neq \emptyset\}||A|^{k-r(Y)}.\end{eqnarray*}

On the other hand, since $D_{X,\bm{M}}$ is an almost affine code, and $\bm{w} \in D_{X,\bm{M}}$, by~\cite[Proposition 2]{SA}, we have \[|\{\bm{v} \in D_{X,\bm{M}},\ \bm{v}_Y=\bm{t}\}| = |D_{X,\bm{M}}(\bm{w},Y)| = |A|^{r_{D_{X,\bm{M}}}(E)-r_{D_{X,\bm{M}}}(Y)}.\] The lemma follows easily.

\qed\end{proof}

A way of measuring how much an intruder gains information is the conditional entropy of the system, namely

\[H(\bm{m} | \bm{t},\bm{M})= \sum_{\bm{M} \in A^X, \bm{t} \in A^Y} p(\bm{M},\bm{t}) \sum_{\bm{m} \in A^{E\-B}} p(\bm{m}|\bm{t},\bm{M}) \log_{|A|}\frac{1}{p(\bm{m}|\bm{t},\bm{M})},\] where $p(\bm{M},\bm{t})$ is the probability to get both $\bm{M}$ and $\bm{t}$, while $p(\bm{m}|\bm{t},\bm{M})$ is the probability the probability that $\bm{m}$ is the original message, knowing that $\bm{m}_X=\bm{M}$ and that the intruder listens to  $\bm{t}$, and with the convention that $0 \log_{|A|} \frac{1}{0} = 0$. The conditional entropy is a measure of how many digits the intruder still does not know on the original message $\bm{m}$, if he has knowledge of a subset $X$ of the symbols of $\bm{m}$, and been able to listen to a subset $Y$ of the symbols of the message $\bm{w} \in C_{\bm{m}}$ sent through the channel. For example, if the intruder is able to listen to everything, then $p(\bm{m}|\bm{t},\bm{M})$ is either $0$ or $1$ (it is $1$ if and only if $\bm{m}$ is the original message), so the conditional entropy is $0$. On the other hand, if the intruder cannot tap any digit, then $p(\bm{M},\bm{t})=\frac{1}{|A|^{|X|}}$ and $p(\bm{m}|\bm{t},\bm{M})$ is either $0$ if $\bm{m}_x \neq \bm{M}$ or $\frac{1}{|A|^{|E \backslash B|-|X|}}$ if $\bm{m}_x = \bm{M}$, so that the conditional entropy is $n-k-|X|$, that is the number of digits of the original message (of length $n-k$) that are still unknown ($|X|$ digits are known already).

\begin{remark}\label{digit}When we say that the intruder knows $s$ extra digits, this means that he knows the equivalent of $s$ digits, not necessarily $s$ actual digits, but that $s$ degrees of freedom have been removed. 
\end{remark}

\begin{theorem} Suppose that $\bm{m} \in A^{E\-B}$ and $\bm{t} \in C_{\bm{m}}$ are chosen randomly and uniformly. Then the conditional entropy of the system is \[H(\bm{m} | \bm{t},\bm{M}) = n-k-|X|-\rho_X(Y).\] 
\end{theorem}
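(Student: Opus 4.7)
The plan is to reduce the entropy to a pure counting computation using Lemmas~\ref{omega} and~\ref{omeganonnul}. First I would pin down the joint distribution of $(\bm{m},\bm{w})$: since $\bm{m}$ is uniform on $A^{E\-B}$ and $\bm{w}$ is uniform on $C_{\bm{m}}$ given $\bm{m}$, one has $p(\bm{m},\bm{w}) = |A|^{-(n-k)}\cdot|A|^{-k} = |A|^{-n}$ on its support. Marginalising over the unseen coordinates $E\-Y$ then gives $p(\bm{m},\bm{t}) = |\Omega_{\bm{t},Y,\bm{m}_X}(\bm{m})|\,|A|^{-n}$, and since $\bm{M}=\bm{m}_X$ is a deterministic function of $\bm{m}$, the joint $p(\bm{m},\bm{t},\bm{M})$ coincides with $p(\bm{m},\bm{t})$ on $\{\bm{m}:\bm{m}_X=\bm{M}\}$ and vanishes elsewhere.

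Next I would compute the marginal $p(\bm{t},\bm{M})$. By Lemma~\ref{omega}, any nonempty $\Omega_{\bm{t},Y,\bm{M}}(\bm{m})$ has cardinality $|A|^{k-r(Y)}$; by Lemma~\ref{omeganonnul}, for each admissible $(\bm{M},\bm{t})$ (meaning $\bm{t}=\bm{w}_Y$ for some $\bm{w}\in D_{X,\bm{M}}$) there are exactly $|A|^{n-k-|X|-\rho_X(Y)}$ such $\bm{m}$. Summing over $\bm{m}$ yields
\[p(\bm{t},\bm{M}) \;=\; |A|^{\,n-k-|X|-\rho_X(Y)}\cdot|A|^{\,k-r(Y)}\cdot|A|^{-n} \;=\; |A|^{-|X|-r(Y)-\rho_X(Y)}\]
on the admissible pairs.

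Dividing, the conditional probability $p(\bm{m}|\bm{t},\bm{M})$ equals $|A|^{-(n-k-|X|-\rho_X(Y))}$ on precisely the $|A|^{n-k-|X|-\rho_X(Y)}$ values of $\bm{m}$ with $\bm{m}_X=\bm{M}$ and $\Omega_{\bm{t},Y,\bm{M}}(\bm{m})\neq\emptyset$, and $0$ elsewhere. Thus for every admissible $(\bm{M},\bm{t})$ the inner entropy sum is the constant $n-k-|X|-\rho_X(Y)$, and since $\sum_{\bm{M},\bm{t}}p(\bm{M},\bm{t})=1$, pulling the constant outside the outer sum delivers the theorem.

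The one care point is bookkeeping of supports: one must verify that $(\bm{M},\bm{t})$ carries positive probability iff it arises as $(\bm{m}_X,\bm{w}_Y)$ for some $\bm{w}\in D_{X,\bm{M}}$, which is exactly the hypothesis under which Lemma~\ref{omeganonnul} applies, and one must invoke the convention $0\log_{|A|}(1/0)=0$ to discard $\bm{m}$ with $\bm{m}_X\neq\bm{M}$ or with $\Omega_{\bm{t},Y,\bm{M}}(\bm{m})=\emptyset$. No deeper obstacle appears; the two preceding lemmas have already done the combinatorial work.
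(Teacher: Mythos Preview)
Your proposal is correct and follows essentially the same route as the paper: both arguments reduce, via Lemmas~\ref{omega} and~\ref{omeganonnul}, to the fact that $p(\bm{m}\mid\bm{t},\bm{M})$ is uniform on a set of size $|A|^{n-k-|X|-\rho_X(Y)}$, whence the inner entropy equals the constant $n-k-|X|-\rho_X(Y)$ and the outer average is trivial. The only difference is cosmetic: you derive that conditional probability explicitly by writing down the joint $p(\bm{m},\bm{t},\bm{M})$ and the marginal $p(\bm{t},\bm{M})$ and dividing, whereas the paper simply asserts the value of $p(\bm{m}\mid\bm{t},\bm{M})$ directly from Lemma~\ref{omeganonnul} and then performs the same sum.
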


\begin{proof}By Lemma~\ref{omeganonnul}, for a given $\bm{M} \in A^X$ and $\bm{t} \in A^Y$, we have \[p(\bm{m}|\bm{t},\bm{M}) = \left\{\begin{array}{ll} 0 & \textrm{ if } \Omega_{\bm{t},Y,\bm{M}}(\bm{m}) = \emptyset \\ 
\frac{1}{|A|^{n-k-|X|-\rho_X(Y)}}& \textrm{ otherwise }
\end{array}\right..\] 

Thus \[\begin{split}\lefteqn{H(\bm{m} | \bm{t},\bm{M}) }\\&= \sum_{\bm{M} \in A^X, \bm{t} \in A^Y} p(\bm{M},\bm{t}) \sum_{\bm{m} \in A^{E\-B}} p(\bm{m}|\bm{t},\bm{M}) \log_{|A|}\frac{1}{p(\bm{m}|\bm{t},\bm{M})} \\&= \sum_{\substack{\bm{M} \in A^X, \bm{t} \in A^Y \\ p(\bm{M},t) \neq 0}} p(\bm{M},\bm{t}) \sum_{\substack{\bm{m} \in A^{E\-B} \\ \Omega_{\bm{t},Y,\bm{M}}(\bm{m}) \neq \emptyset}} p(\bm{m}|\bm{t},\bm{M}) \log_{|A|}\frac{1}{p(\bm{m}|\bm{t},\bm{M})} \\ &= \sum_{\substack{\bm{M} \in A^X, \bm{t} \in A^Y \\ p(\bm{M},t) \neq 0}} p(\bm{M},\bm{t}) \sum_{\substack{\bm{m} \in A^{E\-B} \\ \Omega_{\bm{t},Y,\bm{M}}(\bm{m}) \neq \emptyset}}\frac{n-k-|X|-\rho_X(Y)}
{|A|^{n-k-|X|-\rho_X(Y)}}
\\&=  \sum_{\substack{\bm{M} \in A^X, \bm{t} \in A^Y \\ p(\bm{M},t) \neq 0}} p(\bm{M},\bm{t})(n-k-|X|-\rho_X(Y))
\\&
=n-k-|X| - \rho_X(Y).
\end{split}\]
\qed\end{proof}

\subsubsection{Equivocation of the system}

We are interested in minimizing the amount of information an intruder may have access to  if he gets $X$ digits of the original message, and is able to listen to $\mu$ digits of the message sent over the channel, that is we are interested in maximizing the equivocation \[E_\mu = \min_{|Y|=\mu} H(\bm{m} | \bm{t},\bm{M}),\]  or equivalently minimizing \[\Delta_\mu = n-k-E_\mu,\] the maximum number of digits known to the intruder after listening to $\mu$ digits over the channel.

In~\cite{BJM}, the authors introduce the following profile of demi-matroids:

\begin{definition} \label{central}
Let $(E,r)$ be a demi-matroid of rank $l$, and let $0\leqslant i \leqslant l$. \[\sigma_i=\min\{|X|,\ r(X)=i\}.\]
\end{definition}

\begin{theorem} \label{summingup}
The uncertainty is given by  \[\Delta_\mu = |X|+j \Leftrightarrow \sigma_j \leqslant \mu < \sigma_{j+1}\] for $0 \leqslant j \leqslant n-|X|-k$ and with the convention that $\sigma_{n-|X|-k+1}= \infty$
\end{theorem}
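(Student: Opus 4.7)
The plan is to first unpack the theorem statement and reduce it to a purely combinatorial claim about the demi-matroid $(E,\rho_X)$. By the preceding theorem, $H(\bm m \mid \bm t,\bm M) = n-k-|X|-\rho_X(Y)$, so minimizing the entropy over $|Y|=\mu$ amounts to maximizing $\rho_X(Y)$. Hence
\[
\Delta_\mu \;=\; n-k-E_\mu \;=\; |X| + f(\mu), \qquad f(\mu):=\max_{|Y|=\mu}\rho_X(Y).
\]
Note that $\rho_X(E) = r_{D_{X,\bm{M}}}(E) - r(E) = (n-|X|)-k = n-|X|-k =: \ell$, so $(E,\rho_X)$ has rank $\ell$ and the profile $\sigma_0<\sigma_1<\cdots<\sigma_\ell$ is well defined (with the convention $\sigma_{\ell+1}=\infty$). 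The theorem thus reduces to showing
\[
f(\mu)=j \;\Longleftrightarrow\; \sigma_j\leqslant \mu<\sigma_{j+1}, \qquad 0\leqslant j\leqslant \ell.
\]

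Next, I would prove the ``$\Leftarrow$'' direction by two monotonicity arguments that rely only on axiom (R2) for $\rho_X$. First, if $\mu\geqslant\sigma_j$, pick a witness $Y_0$ with $|Y_0|=\sigma_j$ and $\rho_X(Y_0)=j$, and extend it arbitrarily to a set $Y\supseteq Y_0$ with $|Y|=\mu$; repeated application of (R2) gives $\rho_X(Y)\geqslant \rho_X(Y_0)=j$, so $f(\mu)\geqslant j$. Second, if $\mu<\sigma_{j+1}$, suppose for contradiction that some $Y$ of size $\mu$ has $\rho_X(Y)\geqslant j+1$; removing one element at a time, (R2) forces $\rho_X$ to drop by at most $1$ at each step, so by intermediate value there is $Y'\subseteq Y$ with $\rho_X(Y')=j+1$ and $|Y'|\leqslant\mu<\sigma_{j+1}$, contradicting the definition of $\sigma_{j+1}$. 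Combining the two yields $f(\mu)=j$.

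For the ``$\Rightarrow$'' direction, I would observe that the two claims above already imply it: if $f(\mu)=j$, then applying the first claim with index $j+1$ shows $\mu<\sigma_{j+1}$ (else $f(\mu)\geqslant j+1$), while the second claim applied with index $j-1$ (or direct inspection of a realizer $Y$ of size $\mu$ with $\rho_X(Y)=j$, which gives $\sigma_j\leqslant|Y|=\mu$) yields $\mu\geqslant\sigma_j$. The boundary case $j=\ell$ is handled by the convention $\sigma_{\ell+1}=\infty$, since $f$ is bounded above by $\ell$.

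I do not expect serious obstacles here. The only subtle point is ensuring that the ``drop by at most one'' property transfers from the matroid axiom (R2) to the demi-matroid $(E,\rho_X)$ — but this is exactly axiom (R2) in Definition~\ref{def:demi}, which $\rho_X$ satisfies by the results of Section~\ref{flags}. The rest is a routine manipulation of the profile $\sigma_j$, in the same spirit as the linear-code arguments in \cite{BJM}, and the main point of the theorem is really the reformulation $\Delta_\mu=|X|+f(\mu)$ made possible by the entropy computation in the previous subsection.
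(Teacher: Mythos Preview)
Your proposal is correct and follows essentially the same route as the paper: you reduce $\Delta_\mu$ to $|X|+\max_{|Y|=\mu}\rho_X(Y)$ via the entropy theorem, and then establish the equivalence $f(\mu)=j\Leftrightarrow\sigma_j\leqslant\mu<\sigma_{j+1}$ using the two one-line implications the paper simply labels ``easy to see''. The only difference is that you explicitly justify those implications from axiom (R2) (monotone extension and the one-step drop/intermediate-value argument), which the paper leaves implicit.
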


\begin{proof}We have \[\Delta_\mu = n-k- \min_{|Y| =\mu} H(\bm{m} | \bm{t},\bm{M}) = |X| + \max_{|Y| = \mu} \rho_X(Y).\] 
It is easy to see that \[\max_ {|Y|=\mu} \{\rho_X(Y)\} =j \Rightarrow \mu \geqslant \sigma_j\] and that \[\mu \geqslant \sigma_j \Rightarrow \max_ {|Y|=\mu} \{\rho_X(Y)\}  \geqslant j.\] Then we get the equivalence \[\sigma_j \leqslant \mu < \sigma_{j+1} \Leftrightarrow \max_ {|Y|=\mu} \{\rho_X(Y)\} =j.\] The theorem follows.
\qed\end{proof}

\begin{remark} \label{turnaround}
{\rm In a remark on p. 1225 of \cite{LMVC} the authors note that in the analogous set-up there, using linear codes, the adversary can obtain at least $j$ information bits among-the non-tapped ones, if and only if he/she can tap at least $m_j$ of the encrypted (and transmitted) information bits. In that paper $m_j$ is the
(relative Hamming weight and) demi-matroid invariant 
\[m_j=\min\{|X|,\ \overline{R}(X)=j\}\]
for $R(Y)=r_1(Y)-r_2(Y),$ corresponding
to a pair $(C_1,C_2)$ appearing there.
But if instead one looks at 
$r(Y) = \overline{R}(Y)=r_2^*(Y)-r_1^*(Y)$,
corresponding to the pair of dual codes $(C_2^\perp,C_1^\perp)$, then the
$m_j$ described are precisely the \[\sigma_j=\min\{|X|,\ r(X)=j\}\]
described in Definition \ref{central} above. }
\end{remark}

\subsubsection{An example}

\setcounter{MaxMatrixCols}{18}
Let $A=\F_3^2$ and $\phi:\F_3^{18} \longrightarrow A^9$ defined by \[\phi(x_1,\cdots,x_{18}) = ((x_1,x_2),\cdots,(x_{17},x_{18})).\]  Consider the linear code $L$ over $\F_3$ with generator matrix \[G=\begin{bmatrix}
1&0&1&0&0&0&1&0&0&0&1&0&1&0&1&0&0&0\\
0&1&0&1&0&0&0&1&0&0&0&1&0&1&0&1&0&0\\
0&0&0&0&0&0&1&0&1&0&2&1&0&1&1&0&1&0\\
0&0&0&0&0&0&0&2&0&1&2&0&2&1&0&2&0&1\\
0&0&1&0&1&0&0&1&0&0&0&1&0&0&1&1&1&0\\
0&0&0&1&0&1&2&1&0&0&2&1&0&0&1&0&0&1\end{bmatrix}\]

Consider the folded code $C=\phi(L)$. It can be shown that this code is an almost affine code of length $9$ and dimension $3$ on the alphabet $A$. Moreover, its associated matroid is the non-Pappus matroid (see~\cite[Example 4]{SA}). As such, this code is not equivalent to any linear code. The independent sets of $M_C$ are all subsets of $E=\{1,\dots,9\}$ of cardinality at most $3$ except \[\{1,2,3\},\{1,5,7\},\{1,6,8\},\{2,4,7\},\{2,6,9\},\{3,4,8\},\{3,5,9\},\{4,5,6\}.\] We fix a basis $B=\{7,8,9\}.$

We wish to send a message $\bm{m} \in A^6$, but an intruder has knowledge of the two last digits of the message (that is, $X=\{5,6\}$) , and is able to listen to $\mu$ digits of the sent message. How much information does the intruder know about the message $\bm{m}$. Obviously, the intruder knows at least $2$ digits. But the choice of the $\mu$ digits gives the intruder different amount of information.

The matroid associated to the code $D_{X,M}$ is a matroid of rank $7$ on $E$, with the following bases: \[\begin{gathered}\{ 1, 2, 3, 4, 5, 6, 8 \},
    \{ 1, 2, 3, 4, 5, 7, 9 \},
     \{ 1, 2, 3, 4, 5, 7, 8 \},\\
     \{ 1, 2, 3, 4, 6, 8, 9 \},
     \{ 1, 2, 3, 4, 5, 8, 9 \},
     \{ 1, 2, 3, 4, 5, 6, 7 \},
     \{ 1, 2, 3, 4, 6, 7, 8 \},\\
     \{ 1, 2, 3, 4, 5, 6, 9 \},
     \{ 1, 2, 3, 4, 6, 7, 9 \},
     \{ 1, 2, 3, 4, 7, 8, 9 \}
\end{gathered}\]

The associated demi-matroid of the system has the following sets of profiles:

\[\begin{array}{ccccc} \sigma_0=0,&\sigma_1=3, &\sigma_2=5,&\sigma_3=6, &\sigma_4=7\end{array}.\]

From Theorem~\ref{summingup}, we know that no matter the choice of the digits an intruder listens to, if he listens to $0$, $1$, or $2$ digits, the intruder gets no information whatsoever on the sent message (except on the digits he already knows). If he is able to listen at most $4$ digits, he gets at most $1$ digit of extra information. For example, if the intruder listens to digits $4,5,6$ of the sent message $\bm{m}$, he gets one extra digit of information (in this case, the intruder actually knows the $4$th digit). If the intruder listens to the digits $3,4,8$, then he also gets one extra digit of information (in the sense of Remark~\ref{digit}, not an actual extra digit). While if he listens to digits $1,2,9$, he doesn't get any more information.

\end{document}